\providecommand{\tabularnewline}{\\}
\theoremstyle{plain}
\newtheorem{thm}{\protect\theoremname}
\theoremstyle{plain}
\newtheorem{prop}{\protect\propositionname}
\theoremstyle{plain}
\newtheorem*{prop*}{\protect\propositionname}
\theoremstyle{plain}
\newtheorem{cor}{\protect\corollaryname}
\theoremstyle{plain}
\newtheorem*{cor*}{\protect\corollaryname}
\theoremstyle{plain}
\theoremstyle{plain}
\newtheorem*{lem*}{\protect\lemmaname}
\providecommand{\corollaryname}{Corollary}
\providecommand{\lemmaname}{Lemma}
\providecommand{\propositionname}{Proposition}
\providecommand{\theoremname}{Theorem}
\begin{document}

\title{Complexity of Scheduling Charging in the Smart Grid 
%\vc{Could even drop ``Complexity of''}\dw{complexity is important to distinguish this paper from many others}
}

% \author{Paper ID: 968}
\author{Mathijs de Weerdt \and Michael Albert \and Vincent Conitzer}
\author{Mathijs de Weerdt\\
Delft University of Technology\\
Mekelweg 4, 2628 CD\\
Delft, The Netherlands\\
M.M.deWeerdt@tudelft.nl
\And
Michael Albert \and Vincent Conitzer\\
Department of Computer Science\\
Duke University\\
Durham, NC 27708, USA\\
\{malbert,conitzer\}@cs.duke.edu
}
\maketitle

\begin{abstract}
In the smart grid, the intent is to use flexibility in demand, both to balance demand and supply as well as to resolve potential congestion. A first prominent example of such flexible demand is the charging of electric vehicles, which do not necessarily need to be charged as soon as they are plugged in. The problem of optimally scheduling the charging demand of electric vehicles within the constraints of the electricity infrastructure is called the {\em charge scheduling problem}. The models of the charging speed, horizon, and charging demand determine the computational complexity of the charge scheduling problem. For about 20 variants, we show, using a dynamic programming approach, that the problem is either in P or weakly NP-hard. We also show that about 10 variants of the problem are strongly NP-hard, presenting a potentially significant obstacle to their use in practical situations of scale.
\end{abstract}

\section{Introduction}

Renewable sources for the generation of electricity are intermittent,
but the amount of power generated needs to equal the amount of power
consumed at all times. Because it is expensive to store electricity
or compensate for the fluctuations with carbon-based generators, there is an incentive for providers to make part of the demand flexible and controllable, i.e., make the (electricity) grid ``smart''. For example, electric vehicle
owners can get a discount on their electricity bill if they allow
their provider to charge their car flexibly~\cite{garcia2014plug}. 
Specifically, car owners may have a deadline by which they would like the vehicle charged (say, by 8:00 in the morning), and they may allow the provider to charge anytime before the deadline. Meanwhile, the supply (bought by the provider) or the network capacity for providing electricity at a given time 
may be limited~\cite{de2015best,philipsen2016imperfect}, requiring providers to intelligently utilize capacity over time.

% Depending on how user preferences are modeled,
The problem of deciding when to charge under a common constraint gives rise to a new
class of scheduling problems. The defining difference from the traditional scheduling literature \cite{brucker2007sa,pinedo2012scheduling}
is that such charging jobs are more flexible: not only can they be shifted in time, but the charging speed can also vary over
time. Additionally, the charging resources (``the machines'' in ordinary
scheduling) may vary over time. Further, providers that control flexible demand
will need to solve such scheduling problems repeatedly. Therefore, it is important 
to understand when such problems can be solved optimally within the
time limits required, and what aspects of the model---for example the types of 
user preferences we allow---may make the problem intractable.
We refer to this class of problems generically as the \emph{charge scheduling problem}.

% \MA{I'm not sure what this paragraph is trying to achieve. Maybe combine it with the following paragraph? I'll take a stab at it.}
% The existing scheduling literature is extensive. A special three-field notation
% has been introduced to be able to express the very many variants 
% concisely~\cite{brucker2007sa,pinedo2012scheduling,hartmann2010survey}. 
% Apart from some open problems,
% for most of these problems the complexity is known. However, as we
% show in this paper, most problem variants of scheduling flexible loads
% in a smart grid are different.

While the existing scheduling literature is extensive~\cite{brucker2007sa,pinedo2012scheduling,hartmann2010survey} and for many scheduling problems the computational complexity is known, the unique setting of charge scheduling gives rise to a number of novel variants of the general scheduling problem.
In this paper over 30 variants are identified, their computational complexity is proven,
and for the easy problems a %possible 
%\vc{used to say ``possible'' here, I didn't understand that, makes it sounds like it may or may not be polytime}
polynomial algorithm is provided.

First, we provide a general model of scheduling flexible demand in
a smart grid. Then we discuss the relation to the traditional scheduling
literature by showing, for various restrictions of the problem, 
equivalence to known scheduling problems. Our main results for classifying
all restrictions of the general model can be found in the section
thereafter.

\section{The Charge Scheduling Problem}

We consider a supply of perishable resources (e.g., network capacity or available power) which varies over time. For concreteness, throughout the paper, we will use the running example of electric vehicles that can be flexibly charged according to their owners' specifications.
We discretize time into intervals $T=\left\{ 1,\ldots,\left|T\right|\right\} $ and use $t\in T$ to indicate a specific interval.
The availability of the resource supply at time interval $t$ is represented by a value $m_{t}\in \mathbb{R}$.
This resource supply is allocated to a set of $n$ agents, and the allocation to agent $i$ is denoted by a function $a_{i}:T\rightarrow\mathbb{R}$.
The value of an agent $i$ for such an allocation is denoted by $v_{i}:[T\rightarrow\mathbb{R}]\rightarrow\mathbb{R}$.
In this paper, we focus on problems where the valuation function of agent $i$ can be represented by triples of a value $v_{i,k}$, a deadline $d_{i,k}$ and a resource demand $w_{i,k}$, such that the value $v_{i,k}$ is  obtained if and only if the demand $w_{i,k}$ is met by the deadline $d_{i,k}$.
This allows the agent to express preferences such as: ``I value being able to go to work at \$100, I must leave for work at 8am, and it requires 25 kWh to complete the trip.''
By adding a second deadline, the agent could express: ``I may suddenly fall ill, so 
I value having at least the option to take my car to urgent hospital care at 10pm at \$20, and it would require 10 kWh to complete that trip.''  If so, the scheduler could make sure that the car is charged up to 10 kWh before 10pm and complete the remaining 15 kWh of charge in the rest of the night, for the full \$120 of value; alternatively, the scheduler may decide that \$20 is too low given others' high evening demands and charge the full 25 kWh later in the night for just \$100 in value (assuming \$100 is large enough).\footnote{Note that if the agent {\em actually} drives to urgent care, then an {\em additional} 25 kWh of charge would be necessary to go to work the next morning, for a total of 35 kWh.  The assumption here is that the probability of actually having to drive to urgent care (and then still wanting to go to work the next morning) is negligible; the agent just wants to have the option.  This ``just-in-case'' nature of some agents' desire to have the car charged at least a little bit earlier in the evening seems realistic.  A more sophisticated approach would be to do the full probabilistic modeling and generate contingency plans for charging rather than simple schedules, but this is beyond the scope of this paper. In any case, much of the paper deals with the case where an agent has only one deadline, where this is not an issue.}

To be able to express such a valuation function concisely, we denote the total amount of resources allocated to an agent $i$ up to and including interval $t$ by $\bar{a}_{i}\left(t\right)=\sum_{t'=1}^{t}a_{i}\left(t\right)$.
We then write
$$v_i(a_i,v_{i,k},d_{i,k},w_{i,k}) = 
\begin{cases}
 v_{i,k} & \text{if } \bar{a}_{i}\left(d_{i,k}\right) \geq w_{i,k}\\
 0 & \text{otherwise}
\end{cases}
$$
and $v_i(a_i) = \sum_{k \in K(i)} v_i(a_i,v_{i,k},d_{i,k},w_{i,k})$, where $|K(i)|$ is the number of deadlines for $i$.

When we say (for some variants) that $m_t$, $v_{i,k}$, or $d_{i,k}$ is \emph{polynomially bounded} by the size of the input, we mean that there exists a polynomial function $p(\cdot)$ such that for all $t$, $m_t \leq p(n,|T|)$, or, for all $i,k$, $v_{i,k} \leq p(n,|T|)$, or $d_{i,k} \leq p(n,|T|)$, respectively.
%Additionally, an agent may not be present from the start. 
%We then say it \emph{arrives late} at a time $r_i$, in which case $a_i(t) = 0$ for $t < r_i$.
We aim to find an allocation that maximizes social welfare subject
to the resource constraints, i.e., 
\[
\max\sum_{i}v_{i}(a_{i})
\]

\[
\text{subject to }\sum_{i}a_{i}(t)\leq m_{t}\text{ for every }t
\]
The inequality in the constraint implies free disposal, which in many situations, such as network capacity, is realistic.
This problem has $n\cdot T$ decision variables.
% \vc{It is not clear to me that what follows is correct, given that we just said that all these numbers are polynomially bounded. KNAPSACK can be solved in polytime when the numbers are polynomial. Also not clear that the indivisibility of resources has much to do with this.  If resources are divisible but bids are not partially acceptable (I have 0 value for getting half of what I want), we still have the same issues.  Could just drop the remainder of this paragraph.}
% If resources are indivisible, this problem is weakly NP-hard since knapsack is a special case (e.g., if $\left|T\right|=1$ and each valuation function has a single value for a given amount of resources), but as we prove below, the complexity also depends on the valuation function $v_{i}$ and some additional constraints that are relevant in the control of flexible demand in a smart grid.

In this paper we further consider variants of this problem
along the following dimensions:
\begin{itemize}
\item Each agent $i$ has a maximum \emph{charging speed $s_{i}$} and for all $t$ and $i$, $a_i(t) \leq s_i$. We consider three variants of such a constraint, namely fixed / unbounded / gaps: 
\emph{fixed} means that the maximum
charging speed is the same at all times, 
\emph{unbounded} means that there
is no bound on the charging speed for each individual agent, and
\emph{gaps} means that the maximum charging speed may
be 0 for some time steps and unbounded for others.
%\MA{Is the charging speed constant or unbounded when it isn't 0 for the gaps setting?}
%\dw{Good question. The hardness results for $O(n^c)$ time  work for both unbounded and a constant charging limit; with demand bounded, both are problem variants are in P; only with charging speed polynomially bounded, the problem is in P and not weakly NP-complete, while it is with unbounded charging speed}
\item The number of \emph{periods} $T$ may be constant or polynomially bounded: \emph{constant}
means that there is an a-priori known number of periods for all instances of the problem, denoted by $O(1)$, while \emph{polynomially bounded} means that the number of periods may be large, but is bounded by a polynomial function of the input size, denoted by $O\left(n^{c}\right)$.
\item The model of the \emph{demand} $d_{i,k}$ may be one of constant / polynomial / unbounded, where \emph{constant}
 means that $d_{i,k}\leq D$ for all $i,k$, and that this $D$ is an a-priori known constant, \emph{polynomial} means that each $d_{i,k}$ is bounded by a polynomial function of the input size, and unbounded means that there is no bound on the demand size.
 %\vc{wait -- supply $\sum_t m_t$ is assumed to be polynomially bounded, right (given that both $T$ and $m_t$ are), so is there any sense in demand not being polynomially bounded?  If someone has a demand that exceeds the total supply then we can just throw away that bid, right? }
% \dw{Yes. And the same holds for charging speed. In fact, while working on this at some point I switched to the problem where the supply is not polynomially bounded and forgot to remove the assumption in the beginning of this section.}
\item We can have either a \emph{single} \emph{deadline} per agent, $k= 1$, or \emph{multiple} deadlines
% / multiple-consumption:
 where there may be more than one value-demand-deadline triple
which combine into a total value in the way discussed earlier.
In the case of $k=1$ we simply write $v_{i},d_i,w_i$ to denote $v_{i,1},d_{i,1},w_{i,1}$.
 %(with the previously discussed constraint that there is no consumption in the schedule).
%; multiple-consumption indicates that if there is sufficient charge, that charge goes away and the value is obtained instantly
\end{itemize}

% \vc{section header capitalization is inconsistent, how is this supposed to be done again according to the AAAI style?}\dw{No guideline, but style file itself uses Initial Capitalization.}
\section{Relation to Known Scheduling Problems}
Some of the problem variants within the space identified above map exactly to known scheduling problems.
These relations to known results and the consequences are discussed in this section.

We use the three-field notation $\alpha | \beta | \gamma$,  well established in the scheduling literature, to indicate specific known problems.
This notation can be read as follows (note that this summary is limited to the variants relevant for this paper).
The first field is used to indicate the machines that can be used: $\alpha$ is replaced by $1$ for just a single machine (sequencing problem), by $P$ for identical machines in parallel, by $Q$ for machines with different speeds, and in the case where the number of machines, $m$, is given, $\alpha$ is replaced by $Pm$ or $Qm$, respectively.
The second field, $\beta$, is used for processing characteristics and constraints, and it is replaced by a subset of the following: $pmnt$ denotes that jobs can be pre-empted, i.e., interrupted and continued later, $r_j$ denotes that jobs have release times, and $p_i = p$ denotes that the processing times of all jobs are identical.
The third field is used to indicate the objective function: $\sum v_{j}C_{j}$ for minimizing the weighted sum of completion times, and $\sum v_{j}U_{j}$ for minimizing the weighted sum of values of jobs that have not been scheduled before their deadlines.
%\MA{PROPOSED SENTENCE: We will be exclusively concerned with the objective function $\sum v_{j}U_{j}$ in this work.}
We are exclusively concerned with the objective function $\sum v_{j}U_{j}$ in this work.

An important generalization of such scheduling problems allows for including resource constraints as well, i.e., the so-called resource-constrained project scheduling problem (RCPSP)~\cite{hartmann2010survey}.
For this model, there are algorithms for extensions to deal with continuously divisible resources~\cite[Ch.12.3]{blazewicz2007handbook}, a varying availability of resources with time~\cite{klein2000project}, and the possibility to schedule subactivities of the same activity in parallel, called fast tracking~\cite{vanhoucke2008impact}.
When all these extensions are considered simultaneously, the charge scheduling problem can be seen as a special case by modeling the available power by one available resource type, and having no conditions on the number of processors.
However, for RCPSP and its extensions, very few complexity results have been published after an initial investigation, which shows that for one resource, without preemption, the problem of minimizing the make-span is strongly NP-hard even if there are only three processors available~\cite{blazewicz1983scheduling}.
%\vc{the next sentence seems like something that should come at the end of this paper; what is the point here?  Is the point instead to say ``This NP-hardness result, however, does not apply to the charge scheduling problem, because it is a different special case.''?  (Is that true?)}
This NP-hardness result, however, does not apply to the charge scheduling problem, mainly because of the difference in the objective, but also because in charge scheduling pre-emption is allowed.
%However, It remains open which variants of the charge scheduling problem are indeed strongly NP-hard and which may actually turn out to be tractable.

Below, we first show the equivalence of the single-deadline charge scheduling problem with unit charging speed and unit supply of resources 
%\vc{do we also need: 1 deadline per agent? it seems we do to have the reduction work both ways. if so that needs to be updated in the props and corollaries below as well}
to $1|pmtn|\sum v_{j}U_{j}$.
This variant is in P when values are assumed to be polynomially bounded.
A further consequence of the equivalence is that when the demand of all agents is the same, the problem is equivalent to $1|pmtn;p_{i}=p|\sum v_{j}U_{j}$.
\begin{prop*}
\label{prop:unit-speed-unit-supply}The single-deadline charge scheduling problem with unit
charging speed and unit supply is equivalent to $1|pmtn|\sum v_{j}U_{j}$.
%and if with late arrivals it is equivalent to $1|pmtn,r_{j}|\sum v_{j}U_{j}$.
\end{prop*}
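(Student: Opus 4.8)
The plan is to exhibit explicit, objective-preserving translations in both directions, since ``equivalent'' here means that the two optimization problems are interreducible with matching optimal values (so that complexity results transfer). First I would fix the correspondence between instances: given an instance of the single-deadline charge scheduling problem with unit speed ($a_i(t)\le 1$) and unit supply ($m_t=1$ for all $t$), map each agent $i$ to a job $j$ with processing time $p_j=w_i$, deadline $d_j=d_i$, and weight $v_j=v_i$; the inverse map is identical. The unit-supply constraint $\sum_i a_i(t)\le 1$ forces at most one unit of work per interval, which is exactly the capacity of a single machine, so the two resource structures coincide (indeed, $a_i(t)\le 1$ is already implied by $\sum_i a_i(t)\le 1$).

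Next I would translate feasible solutions, identifying discrete interval $t$ with the continuous interval $(t-1,t]$. Given a feasible allocation, I build a preemptive single-machine schedule by running job $j$ for a total duration $a_i(t)$ somewhere inside $(t-1,t]$; because $\sum_i a_i(t)\le 1$ these pieces fit disjointly inside the unit-length interval, so the schedule is a valid preemptive schedule. Conversely, from any preemptive schedule I set $a_i(t)$ to be the amount of time job $j$ runs during $(t-1,t]$; feasibility ($a_i(t)\le 1$ and $\sum_i a_i(t)\le 1$) is immediate since the machine runs at most one job at a time over a unit interval. Under either direction, $\bar a_i(d_i)=\sum_{t'=1}^{d_i}a_i(t')$ equals the total processing devoted to job $j$ by continuous time $d_i$, so agent $i$ attains its value (i.e. $\bar a_i(d_i)\ge w_i$) if and only if job $j$ receives its full processing $p_j$ by time $d_j$, i.e. $C_j\le d_j$ and $U_j=0$.

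It remains to match the two objectives, where the only real care is needed. Charge scheduling \emph{maximizes} $\sum_i v_i(a_i)=\sum_{i:\,\bar a_i(d_i)\ge w_i} v_i$, the total value of \emph{satisfied} agents, whereas $1|pmtn|\sum_j v_j U_j$ \emph{minimizes} the total weight of \emph{late} jobs. Since every agent/job is either satisfied/on-time or not, these are complementary: $\sum_{i:\,\bar a_i(d_i)\ge w_i} v_i = \big(\sum_i v_i\big)-\sum_j v_j U_j$, where $\sum_i v_i$ is a constant fixed by the instance. Hence a solution maximizes the charge-scheduling objective exactly when it minimizes $\sum_j v_j U_j$, and the optimal values differ by this known constant. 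I would also note that free disposal is harmless: the value of agent $i$ depends only on $\bar a_i(d_i)$, so without loss no agent is allocated more than $w_i$ in total nor anything after $d_i$, which is why the $\ge$ in the satisfaction condition corresponds cleanly to completing exactly $p_j$ units of processing by the deadline.

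The main obstacle I anticipate is not conceptual depth but the bookkeeping in the two-way translation between real-valued per-interval allocations and continuous preemptive schedules, together with stating the complementary-objective identity precisely enough that ``equivalent'' is unambiguous. Once the interval-to-$(t-1,t]$ identification and the constant offset between the objectives are pinned down, both reductions run in linear time and the equivalence follows.
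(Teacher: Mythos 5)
Your proposal is correct and follows essentially the same reduction as the paper: identify each agent (value, deadline, demand) with a job (weight, deadline, processing time) under unit supply and unit speed, and observe that maximizing the value of satisfied agents is the complement of minimizing the weight of late jobs. The paper states this more tersely, while you additionally spell out the translation between per-interval allocations and continuous preemptive schedules, but the underlying argument is the same.
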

\begin{proof} 
%(sketch of reduction)
%Let an instance of $1|pmtn,r_j|\sum v_{j}U_{j}$ be given: for each job
Consider the following reduction.  Let an instance of $1|pmtn|\sum v_{j}U_{j}$ be given: for each job
$i$, we have a 
%release time $r_i$, 
deadline $d_{i}$, processing time $p_{i}$, and a
value $v_{i}$. The objective of this problem is to schedule the subset of jobs
on the single machine (with preemption) before their deadlines such
that the sum of values of unscheduled jobs is minimized. Define now
a charge scheduling problem with supply $m_t=1$ for all $t$, and for each job $i$ an agent $i$
with a charging speed of 1, a deadline $d_{i}$, weight $w_{i}=p_{i}$,
and value $v_{i}$. Then, a solution to the 
%single-processor 
scheduling
problem 
%with minimal weighted late tasks
containing tasks $J$ and having
cost $\sum_{j\not\in J}v_{j}$ corresponds to a 
charging schedule with value of $\sum_{j\in J}v_{j}$, and vice versa.  (Note the reduction works in the other direction as well.)
\end{proof}
\noindent With this proposition, and because an algorithm exists for $1|pmtn,r_{j}|\sum v_{j}U_{j}$ that runs in time $O\left(n\sum v_{j}\right)$~\cite{lawler1990dynamic}:
\begin{cor*}
The single-deadline charge scheduling problem with unit charging speed and unit supply can be solved in time
$O\left(n\sum v_{j}\right)$, so is in P if values are polynomial,
and weakly NP-complete otherwise.
\end{cor*}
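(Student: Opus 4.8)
The plan is to chain the Proposition together with the cited scheduling algorithm, and then argue the two complexity claims separately. First I would observe that, by the Proposition, the single-deadline unit-speed unit-supply charge scheduling problem is equivalent to $1|pmtn|\sum v_j U_j$, and that the latter is precisely the special case of $1|pmtn,r_j|\sum v_j U_j$ in which every release time is fixed to $r_j = 0$. Feeding such an instance into the cited $O(n\sum v_j)$ algorithm for $1|pmtn,r_j|\sum v_j U_j$ therefore solves the charge scheduling instance within the same time bound. Since the equivalence established in the Proposition is value-preserving and both directions of its reduction are computable in polynomial time, the running time transfers with no asymptotic overhead, giving the claimed $O(n\sum v_j)$ bound.

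For the membership claims I would argue as follows. If the values are polynomially bounded, say $v_{i} \le p(n,|T|)$ for all $i$, then $\sum_i v_i \le n\,p(n,|T|)$ is polynomial in the input size, so the $O(n\sum v_j)$ bound is polynomial and the problem lies in P. In the general case, membership in NP is immediate from the equivalence: a certificate is the subset $J$ of agents whose demand is to be fully met, and verifying that these demands can be simultaneously satisfied by their deadlines on a single unit-capacity resource with preemption is a polynomial-time feasibility check (earliest-deadline-first). Moreover the running time $O(n\sum v_j)$ is itself pseudo-polynomial, being polynomial in the numeric value $\sum v_j$ but not in its encoding length, so it already witnesses that the problem is at worst weakly, and not strongly, NP-hard.

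The one step that genuinely requires work is the NP-hardness, and here I would give a pseudo-polynomial reduction from \textsc{Partition}. Given positive integers $a_1,\dots,a_n$ with $\sum_j a_j = 2B$, I construct a charge scheduling instance with unit supply $m_t = 1$, horizon $|T| = B$, and one agent $i$ per number, having charging speed $s_i = 1$, common deadline $d_i = B$, demand $w_i = a_i$, and value $v_i = a_i$. The supply constraint bounds the total resource delivered by the deadline by the number of elapsed periods, so for any schedule the total demand of fully satisfied agents is at most $B$; conversely a subset summing to $B$ can be realized by allocating each such agent its own block of $a_i$ consecutive unit-steps, respecting both supply and speed. Hence a social welfare of exactly $B$ is achievable if and only if some subset of the $a_j$ sums to $B$, i.e. if and only if the \textsc{Partition} instance is a yes-instance. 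Because the values $v_i = a_i$ are exactly the (possibly large) input numbers, this reduction lives in the ``values not polynomially bounded'' regime, completing the proof. The main obstacle is simply making this reduction airtight, in particular checking that preemption and the unit speed cap cannot pack more than $B$ units of satisfied demand before the deadline, which follows directly from the single-unit supply constraint.
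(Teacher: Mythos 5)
Your proposal is correct and, for the substantive claims, follows the same route as the paper: the paper derives the $O(n\sum v_j)$ bound exactly as you do, by composing the equivalence from the Proposition with Lawler's algorithm for $1|pmtn,r_j|\sum v_j U_j$ (specialized to $r_j=0$), and the membership-in-P claim for polynomial values is the same one-line observation. The only place you diverge is the hardness direction: the paper gives no explicit argument there, implicitly inheriting weak NP-hardness of $1|pmtn|\sum v_j U_j$ from the scheduling literature, whereas you spell out the standard \textsc{Partition} reduction directly in the charge-scheduling language. That reduction is sound (the unit supply caps total satisfied demand at $B$, and a subset summing to $B$ is schedulable in consecutive blocks), and it makes the corollary self-contained. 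One subtlety you should acknowledge: your instance has horizon $|T|=B$, which is exponential in the encoding length of the \textsc{Partition} numbers, so the reduction is polynomial only under an input convention in which a constant supply profile is described compactly (e.g., ``$m_t=1$ for all $t$'' with $|T|$ in binary) rather than by listing $m_1,\dots,m_{|T|}$. This is consistent with the paper's own equivalence, which maps binary-encoded scheduling deadlines directly to charge-scheduling deadlines, but it sits in tension with the later sections, where $|T|$ is always taken to be constant or polynomially bounded; a sentence making the convention explicit would close the loop.
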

%\noindent Here, ``values are polynomial'' is short for the condition that all values are bounded by a polynomial function of the input size.
%\dw{We already made this assumption in the problem definition.}
%(i.e., the number of jobs).

Additionally, the same reduction can be used when all demand is the same, i.e., when $w_i=w$ (so $p_i=p$).
\begin{cor*}
\label{prop:unit-speed-supply-same-demand}The single-deadline charge scheduling problem with
same demand, unit charging speed, and unit supply is equivalent to
$1|pmtn;p_{i}=p|\sum v_{j}U_{j}$.
% , and with late arrivals it is equivalent to $1|pmtn;p_{i}=p;r_{i}|\sum v_{j}U_{j}$.
\end{cor*}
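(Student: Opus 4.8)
The plan is to recognize that this corollary is an immediate specialization of the preceding proposition, obtained by restricting the very same reduction to instances in which all jobs share a common processing time (equivalently, all agents share a common demand). First I would recall the bidirectional reduction established in the proof of the proposition: an instance of $1|pmtn|\sum v_j U_j$ with jobs carrying deadline $d_i$, processing time $p_i$, and value $v_i$ is mapped to a charge scheduling instance with unit supply ($m_t = 1$), unit charging speed, deadline $d_i$, demand $w_i = p_i$, and value $v_i$; the two directions preserve feasible schedules and their objective values exactly, since a completed task contributing $v_j$ on one side corresponds to an agent whose demand is met before its deadline contributing $v_j$ on the other.

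The key step is then to verify that the \emph{same demand} restriction on the charge scheduling side corresponds precisely to the $p_i = p$ restriction on the scheduling side. Because the reduction identifies $w_i$ with $p_i$, imposing $w_i = w$ for all agents is literally the same as imposing $p_i = p$ for all jobs. Consequently, restricting the domain of the proposition's reduction to these constrained instances yields a bijection between same-demand instances of the single-deadline charge scheduling problem with unit speed and unit supply and instances of $1|pmtn;p_i=p|\sum v_j U_j$, preserving objective values in both directions.

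Finally I would confirm that both directions of the equivalence survive the added restriction. The forward map sends an equal-processing-time scheduling instance to an equal-demand charge instance, and the reverse map sends an equal-demand charge instance back to an equal-processing-time scheduling instance; since the original correspondence already matched objective values, nothing beyond noting the preservation of the common-value constraint is required. I expect there to be essentially no hard part here: the only thing to check is that the extra constraint is respected under the reduction, which holds by construction because demand and processing time are identified, so the restricted reduction inherits correctness directly from the proposition.
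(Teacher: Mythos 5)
Your proposal is correct and matches the paper's approach exactly: the paper derives this corollary by simply observing that the same reduction from the preceding proposition applies when all demands are equal, since $w_i = p_i$ identifies the same-demand restriction with the $p_i = p$ restriction. Nothing further is needed.
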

When the demand for each agent is identical, polynomial algorithms are known even if values are not bounded~\cite{baptiste1999polynomial}.
\begin{cor*}
The single-deadline charge scheduling problem with identical demand, unit charging speed, and unit supply can be solved in time $O\left(n^{7}\right)$.
%, or in $O\left(n^{10}\right)$ with late arrivals.
\end{cor*}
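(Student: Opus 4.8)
The plan is to obtain this bound as a direct consequence of the equivalence just established, combined with a known polynomial-time algorithm. By the preceding corollary, the single-deadline charge scheduling problem with same demand, unit charging speed, and unit supply is equivalent to $1|pmtn;p_i=p|\sum v_j U_j$. The reduction witnessing this (the same one used in the proposition) is computable in polynomial time, introduces exactly one job per agent with common processing time $p=w$, and maps feasible schedules to feasible allocations of equal value in both directions; hence any algorithm for the scheduling problem yields an algorithm for the charge scheduling problem with only polynomial overhead.

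Next I would invoke Baptiste's dynamic-programming algorithm~\cite{baptiste1999polynomial}, which minimizes the weighted number of late jobs on a single machine under equal processing times in time $O(n^7)$, crucially without any dependence on the magnitudes of the values $v_i$. Since the reduced instance has $n$ jobs of common length $w$, running this algorithm and translating the resulting optimal job set back into a charging schedule delivers the claimed $O(n^7)$ running time. The absence of a dependence on $\sum v_j$ is exactly what distinguishes this identical-demand variant from the general unit-speed unit-supply case of the earlier corollary, which was only weakly NP-complete.

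The conceptual content is carried entirely by the equivalence and the cited algorithm, so I expect no substantial obstacle; the statement is essentially a corollary. The only points needing care are (i) confirming that the variant solved in~\cite{baptiste1999polynomial}---with equal processing times, and in its full generality with release dates---covers the release-date-free, preemptive instance produced by the reduction (here one observes that without release dates preemption does not help for the $\sum v_j U_j$ objective under equal processing times, so setting all release dates to $0$ in Baptiste's algorithm suffices), and (ii) checking that the reduced instance has size linear in $n$, so that the reduction overhead is dominated by the $O(n^7)$ term. Neither is expected to present a genuine difficulty.
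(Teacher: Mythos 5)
Your proposal is correct and follows exactly the route the paper intends: use the preceding corollary's equivalence with $1|pmtn;p_i=p|\sum v_j U_j$ and then invoke Baptiste's $O(n^7)$ algorithm for equal processing times, whose running time does not depend on the magnitudes of the values. The paper states this corollary without an explicit proof, and your added care about release dates and the dispensability of preemption is a reasonable (and harmless) elaboration of the same argument.
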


%\vc{The following proposition was unclear to me, I think because there were some typos (e.g., unbounded supply instead of constant supply).  I tried to rewrite, please check it makes sense.}
For non-unit (but fixed, $m_t=m$) supply, the charge scheduling problem is equivalent to a multi-machine problem where all agents have an identical charging speed $s_i = s$ and the supply is a multiple of the charging speed.
\begin{prop*}
\label{prop:same-speed}The single-deadline charge scheduling problem with identical charging
speed $s$ among agents, and fixed supply $m_t=m=ks$ for some $k\in\mathbb{N}$ is
equivalent to $P|pmtn|\sum v_{j}U_{j}$.
\end{prop*}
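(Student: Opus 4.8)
The plan is to mirror the reduction used for the unit-speed, unit-supply case, scaling appropriately to account for the common speed $s$ and the supply $m=ks$. Given an instance of $P|pmtn|\sum v_{j}U_{j}$ with $k$ machines and jobs having processing times $p_{j}$, deadlines $d_{j}$, and values $v_{j}$, I would construct a charge scheduling instance with supply $m_{t}=ks$ for all $t$, and for each job $j$ an agent $j$ with charging speed $s$, deadline $d_{j}$, demand $w_{j}=s\,p_{j}$, and value $v_{j}$. Conversely, from a charge scheduling instance of the stated form I obtain a $P|pmtn|\sum v_{j}U_{j}$ instance by setting the number of machines to $k=m/s$ and each job's processing time to $p_{j}=w_{j}/s$. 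As in the single-machine proposition, the two objectives are complementary: a subset $J$ of jobs completed by their deadlines, with cost $\sum_{j\notin J}v_{j}$, corresponds to a schedule satisfying the agents in $J$ with value $\sum_{j\in J}v_{j}$, and vice versa. Hence it remains only to show that the \emph{feasibility} of satisfying a given subset coincides in the two models.

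First I would normalize the charging model by dividing every allocation by $s$: writing $b_{i}(t)=a_{i}(t)/s$, the constraints $a_{i}(t)\le s$ and $\sum_{i}a_{i}(t)\le ks$ become $b_{i}(t)\in[0,1]$ and $\sum_{i}b_{i}(t)\le k$, while agent $i$ is satisfied exactly when $\sum_{t\le d_{i}}b_{i}(t)\ge p_{i}$. In this form the condition is precisely the \emph{fractional} (processor-sharing) version of preemptive scheduling on $k$ identical machines: at each time step each job receives a processing rate in $[0,1]$, no more than $k$ units of rate are dispensed in total, and a job is completed once its accumulated rate up to its deadline reaches $p_{i}$. The per-agent cap $b_{i}(t)\le 1$ corresponds to a job occupying at most one whole machine at a time, and the total cap $\sum_{i}b_{i}(t)\le k$ corresponds to the $k$-machine capacity.

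The main obstacle, and the one genuinely new ingredient relative to the single-machine case, is bridging this fractional model with the integral preemptive model of $P|pmtn$, in which at any instant each job runs on at most one machine and each machine runs at most one job. I would close this gap by invoking the classical characterization of preemptive parallel-machine feasibility---McNaughton's wrap-around construction together with the network-flow feasibility conditions for jobs with individual deadlines---which guarantees that a set of jobs admits a feasible fractional rate assignment if and only if it admits a genuine preemptive schedule on $k$ machines; the deadlines induce the natural partition of the horizon into blocks on which the flow argument operates, matching the discrete time structure of the charge scheduling problem. Since feasible (satisfiable) subsets therefore correspond under the reduction and carry equal value, optimal solutions correspond in both directions, establishing the claimed equivalence.
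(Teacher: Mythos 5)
Your proposal follows essentially the same route as the paper: scale supply, charging speed, and demand by $1/s$ to reduce to unit speed with integer supply $k$, then identify the result with preemptive scheduling on $k$ identical machines, with satisfied agents corresponding to on-time jobs. The only difference is that you explicitly justify the step the paper leaves implicit---that a fractional rate assignment with per-job cap $1$ and total cap $k$ in each time step can be realized as a genuine preemptive $k$-machine schedule---via McNaughton's wrap-around construction, which is a correct and worthwhile addition rather than a change of approach.
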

\begin{proof}
Let a single-deadline charge scheduling problem instance with fixed supply $m_t=m=ks$ and a set of tasks
$i$ with charge demand $w_{i}$, charging speed $s$, value $v_{i}$,
and deadline $d_{i}$ be given. Observe that this is equivalent
to a charge scheduling problem where supply $m$, charging speed $s$, and demand
$w_{i}$ are all scaled by dividing by $s$, so supply is $m'=k$,
speed is $s'=1$, and demand is $w'_{i}=\frac{w_{i}}{s}$. 
From this we conclude that such a charge scheduling problem with unit charging speed and integer
supply $k$ is equivalent to a scheduling problem with $m'$ machines.
\end{proof}
Using this proposition and known results from the scheduling literature we immediately obtain a dynamic program (DP), and the complexity class that this variant belongs to.
\begin{cor*}
The single-deadline charge scheduling problem with the same charging speed $s$ and fixed
supply $m_t=m=ks$ for $k\in\mathbb{N}$ is weakly NP-hard and has an $O(n^{2}\sum_{i}v_{i})$
DP algorithm, implying that the problem is in P if values are polynomial. 
%\vc{but haven't we assumed our values are always polynomial?}
\end{cor*}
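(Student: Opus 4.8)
The plan is to lean on the equivalence to $P|pmtn|\sum v_j U_j$ established in the preceding proposition, treating the two assertions—weak NP-hardness and the pseudopolynomial dynamic program—separately. For hardness, I would observe that fixing $k=1$ (equivalently, taking $s=1$ and $m=1$) specializes the present variant to the single-deadline charge scheduling problem with unit charging speed and unit supply, which the earlier corollary already classifies as weakly NP-complete through its equivalence to $1|pmtn|\sum v_j U_j$. Since this known-hard case sits inside the current variant, weak NP-hardness follows at once, and the pseudopolynomial DP below certifies that the hardness is only weak.

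For the algorithm, after the scaling in the proposition we may assume unit charging speed and integer supply $k=m$, i.e. $m$ identical parallel machines with preemption and all jobs released at time $0$. I would sort the agents in earliest-deadline-first (EDD) order $d_1 \le \dots \le d_n$ and build a table indexed by the prefix of agents considered and by the total accepted value $V$, storing for each pair $(j,V)$ a most-easily-schedulable accepted subset of $\{1,\dots,j\}$ of total value $V$ whose jobs all meet their deadlines on the $m$ machines. The transition for job $j$ either rejects it (inheriting the entry for $(j-1,V)$) or accepts it whenever a feasibility test for adding $j$ passes. As $V$ ranges over $0,\dots,\sum_i v_i$ across $n$ prefixes, the table has $O(n\sum_i v_i)$ entries, and charging $O(n)$ to the feasibility test at each transition yields the claimed $O(n^2\sum_i v_i)$ bound. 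The optimum is the largest $V$ reached at $j=n$, and the membership in P is then immediate, since polynomially bounded values make $\sum_i v_i$ polynomial.

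The main obstacle is the feasibility test, which is genuinely harder than on a single machine. On one machine, job $j$ may be appended exactly when the total processing time of the accepted set does not exceed $d_j$, so a single scalar summarizes everything that matters. On $m$ machines this breaks down: a long job can be \emph{crowded out} by more urgent jobs even when the total work fits within the available machine-capacity, because no job can run on two machines at once (for instance, two unit jobs due at time $1$ occupy both of two machines on $[0,1]$, leaving a length-$3$ job due at time $3$ only the interval $[1,3]$, in which it can complete at most $2$ of its $3$ units). I would therefore base the test on the interval/McNaughton (max-flow) characterization of preemptive feasibility: splitting $[0,\max_i d_i]$ at the distinct deadlines, a selected set is schedulable iff at every deadline threshold its total work fits in the corresponding machine-capacity \emph{and}, simultaneously, each job can claim enough single-machine time inside its own window. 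The crux is to show that processing in EDD order lets the DP maintain just enough of this certificate—rather than the full capacity profile—so the per-transition test runs in $O(n)$ while staying exact; correctness then rests on an exchange argument proving that whenever an accepted subset is schedulable at all, it is schedulable in the canonical EDD form the DP implicitly constructs.
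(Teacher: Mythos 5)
Your hardness argument is fine, and is a legitimately different (arguably cleaner) route than the paper's: you restrict to $k=1$ and invoke the already-established equivalence with $1|pmtn|\sum v_j U_j$, whereas the paper cites the known weak NP-hardness of $P|pmtn|\sum U_j$ directly. Either works. The paper's proof of the algorithmic half, however, is purely by citation: it maps the problem to $P|pmtn|\sum v_j U_j$ (and thence to $Qm|pmtn|\sum v_j U_j$) and invokes Lawler's $O(n^{2}\sum_i v_i)$ dynamic program for that problem. You instead try to reconstruct such a DP, and that reconstruction has a genuine gap.

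Concretely: your DP stores, for each pair $(j,V)$, ``a most-easily-schedulable accepted subset'' of the first $j$ jobs with value $V$, but you never define the dominance order that makes this well-posed. On one machine the state collapses to a scalar (minimal total processing time), and minimality of that scalar dominates all future decisions; on $m$ identical machines the preemptive feasibility of a set is characterized by a family of interval conditions (your own two-unit-jobs example shows that total work per deadline prefix is not enough), so the natural state is a vector, and it is not clear that a single representative subset per $(j,V)$ can dominate all others with respect to every future extension. You explicitly defer exactly this point --- that EDD order lets the DP carry ``just enough of the certificate'' with an $O(n)$ test, justified by an unstated exchange argument --- and that deferred step is the entire content of the algorithmic claim. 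As written, the proposal establishes weak NP-hardness but not the $O(n^{2}\sum_i v_i)$ upper bound; to close it you would either need to work out the dominance lemma in full (essentially re-proving Lawler's result) or simply cite it as the paper does.
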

\begin{proof}
This charge scheduling problem is equivalent to $P|pmtn|\sum v_{j}U_{j}$ by the proposition above, %~\ref{prop:same-speed}
which in its turn can be modeled
as $Qm|pmtn|\sum v_{j}U_{j}$~\cite{brucker2007sa}. 
%\MA{I can't figure out what $Qm$ is} 
%\dw{I've added ", and if the number of machines $m$ is given, by $Pm$ or $Qm$, respectively." above}
For this problem a $O(n^{2}\sum_{i}v_{i})$
DP exists~\cite{lawler1989preemptive}, which is polynomial if 
$v_{i}$ are polynomially bounded. 
Furthermore, $P|pmtn|\sum U_{j}$ is weakly
NP-hard~\cite{lawler1983recent} 
%\vc{Are we sure this reduction satisfies our conditions on certain numbers always being of polynomial size?  In fact didn't we generally assume that valuations are polynomial, which is the condition for it being polytime solvable given above, so how can our problem ever be hard in this context?}
and thus so are $P|pmtn|\sum v_{j}U_{j}$ and the charge scheduling problem.
\end{proof}
Furthermore, this corollary can be generalized to include charge scheduling problems where $\frac{m}{s}$ is not integer.
In this case, we translate directly to $Qm|pmtn|\sum v_{j}U_{j}$ by having $\left\lfloor \frac{m}{s}\right\rfloor $
machines with speed 1, and 1 machine with speed $\frac{m\textrm{ mod } s}{s}$.
\begin{cor*}
The single-deadline charge scheduling problem with identical (fixed) charging speed $s$ and fixed
supply $m_t=m=ks$ for $k\in\mathbb{Q}$ is weakly NP-hard and has an $O(n^{2}\sum_{i}v_{i})$
DP algorithm, making the problem in P if values are polynomial.
%\vc{same comment as above} 
\end{cor*}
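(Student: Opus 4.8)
The plan is to mimic the reduction used in the preceding corollary, introducing one extra slow machine to absorb the fractional part of $m/s$. First I would normalize the instance exactly as in the proposition on identical speeds: dividing supply, charging speed, and all demands by $s$ yields an equivalent instance with unit charging speed, supply $k=m/s\in\mathbb{Q}$, and demands $w_i'=w_i/s$. This isolates the only new difficulty, namely that $k$ need no longer be an integer. I would then build the target $Qm|pmtn|\sum v_{j}U_{j}$ instance by representing the supply $k$ as $\lfloor k\rfloor$ machines of speed $1$ together with a single machine of speed $f=k-\lfloor k\rfloor=(m\bmod s)/s$, so that the $m'=\lfloor k\rfloor+1$ machines have total speed $\sum_j s_j=k$; each job $i$ keeps its deadline $d_i$, value $v_i$, and processing requirement $w_i'$.

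The heart of the argument is to show that this correspondence is exact, i.e., that a set of jobs is feasibly (preemptively) schedulable before its deadlines on these uniform machines if and only if the corresponding agents can be charged to their demands by their deadlines. Because preemption and continuous divisibility are allowed, a schedule is fully captured by its instantaneous throughput profile over time, so it suffices to compare the polytopes of achievable instantaneous rate vectors. Here I would invoke the classical feasibility characterization for preemptive scheduling on uniform machines: a sorted rate vector $r_{[1]}\ge r_{[2]}\ge\cdots$ is realizable on speeds $s_1\ge\cdots\ge s_{m'}$ precisely when $\sum_{i=1}^{\ell} r_{[i]}\le\sum_{j=1}^{\ell} s_j$ for every $\ell$. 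In our construction the charge-scheduling constraints are exactly $a_i(t)\le 1$ (unit speed) and $\sum_i a_i(t)\le k$ (supply).

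I expect the only genuinely delicate step to be verifying that the intermediate majorization-type inequalities hold automatically under these two constraints; the rest is bookkeeping. For $\ell\le\lfloor k\rfloor$ we have $s_1+\cdots+s_\ell=\ell$, and since each rate satisfies $a_i(t)\le 1$, the sum of the $\ell$ largest rates is at most $\ell$, so the $\ell$-th inequality holds; for $\ell=m'$ the condition reads $\sum_i a_i(t)\le k=\sum_j s_j$, which is exactly the supply constraint. Hence every feasible charge allocation lies in the uniform-machine rate polytope, and conversely any preemptive schedule obeys both constraints, since no job can progress faster than the fastest machine speed $1$ and total throughput never exceeds $k$. This establishes that the two feasible sets of schedules coincide under the job–agent correspondence.

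Finally, with the equivalence in hand, the algorithmic and hardness claims follow immediately. The $O(n^{2}\sum_i v_i)$ dynamic program for $Qm|pmtn|\sum v_{j}U_{j}$~\cite{lawler1989preemptive} transfers directly, yielding a polynomial algorithm whenever the values are polynomially bounded; note that under preemption at most $n$ machines are ever useful, so the machine count introduced by the reduction causes no blow-up. Weak NP-hardness is inherited because the integer case $k\in\mathbb{N}$ of the preceding corollary is a special case of the present statement.
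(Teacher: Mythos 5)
Your proposal is correct and follows essentially the same route as the paper: the paper likewise reduces to $Qm|pmtn|\sum v_{j}U_{j}$ using $\left\lfloor \frac{m}{s}\right\rfloor$ unit-speed machines plus one machine of speed $\frac{m \bmod s}{s}$, and then invokes the $O(n^{2}\sum_{i}v_{i})$ dynamic program of Lawler. The only difference is that you explicitly verify the schedule equivalence via the majorization-type feasibility conditions for preemptive uniform-machine scheduling, a detail the paper leaves implicit.
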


Concluding, with fixed charging speeds and supply, the charge scheduling problem is equivalent to some known machine scheduling problems and, fortunately, often the easier ones (i.e., the weakly NP-hard variants).
However, when supply (i.e., the number of machines) varies over time, or when charging speeds (i.e., the maximum number of machines allowed for a single job) differ per agent (job), the existing literature does not readily provide an answer to the question of the charge scheduling problem complexity.

\section{Complexity of Charge Scheduling}

\begin{table*}[ht]
\noindent \begin{flushleft}
\begin{tabular}{ccc}
\emph{gaps} & \emph{fixed charging speed} & \emph{unbounded charging speed}\tabularnewline
\begin{tabular}{c|>{\centering}p{1.15cm}|>{\centering}p{1.15cm}|>{\centering}p{1.15cm}|}
\multicolumn{1}{c}{\emph{$|T|$}} & \multicolumn{1}{>{\centering}p{1.15cm}}{demand constant} & \multicolumn{1}{>{\centering}p{1.15cm}}{demand polynomial} & \multicolumn{1}{>{\centering}p{1.15cm}}{demand unbounded}\tabularnewline
\cline{2-4} 
$O(1)$ & P$^{T\ref{thm:c-periods}}$ & P$^{T\ref{thm:c-periods}}$ & weak NP-c$^{T\ref{thm:c-periods}}$\tabularnewline
\cline{2-4} 
$O\left(n^{c}\right)$ & strong NP-c$^{T\ref{thm:strong-gaps}}$ & strong NP-c$^{T\ref{thm:strong-gaps}}$ & strong NP-c$^{T\ref{thm:strong-gaps}}$\tabularnewline
\cline{2-4} 
\end{tabular} & %
\begin{tabular}{c|>{\centering}p{1.15cm}|>{\centering}p{1.15cm}|>{\centering}p{1.15cm}|}
\multicolumn{1}{c}{} & \multicolumn{1}{>{\centering}p{1.15cm}}{demand constant} & \multicolumn{1}{>{\centering}p{1.15cm}}{demand polynomial} & \multicolumn{1}{>{\centering}p{1.15cm}}{demand unbounded}\tabularnewline
\cline{2-4} 
 & P$^{T\ref{thm:c-periods}}$ & P$^{T\ref{thm:c-periods}}$ & weak NP-c$^{T\ref{thm:c-periods}}$\tabularnewline
\cline{2-4} 
 & ? & ? & ? NP-c$^{P\ref{prop:knapsack}}$\tabularnewline
\cline{2-4} 
\end{tabular} & %
\begin{tabular}{c|>{\centering}p{1.15cm}|>{\centering}p{1.15cm}|>{\centering}p{1.15cm}|}
\multicolumn{1}{c}{} & \multicolumn{1}{>{\centering}p{1.15cm}}{demand constant} & \multicolumn{1}{>{\centering}p{1.15cm}}{demand polynomial} & \multicolumn{1}{>{\centering}p{1.15cm}}{demand unbounded}\tabularnewline
\cline{2-4} 
 & P$^{T\ref{thm:c-periods}}$ & P$^{T\ref{thm:c-periods}}$ & weak NP-c$^{T\ref{thm:c-periods}}$\tabularnewline
\cline{2-4} 
 & P$^{T\ref{thm:p-periods}}$ & P$^{T\ref{thm:p-periods}}$ & weak NP-c$^{T\ref{thm:p-periods}}$\tabularnewline
\cline{2-4} 
\end{tabular}\tabularnewline
\end{tabular}
\par\end{flushleft}

\smallskip{}

\caption{\label{tab:Problem-complexity-of}Problem complexity of variants with
a single deadline per agent, where T$x$ refers to Theorem~$x$, P$x$ to Proposition~$x$, and `? NP-c' means that the problem variant is NP-complete, but
it is an open problem whether this is strong or weak. 
% \MA{We don't really discuss the fact that we have question marks in this table. Shouldn't we talk about them?} \dw{I've added a few lines summarizing this at the end of this section.}
}
\end{table*}

We next analyze the complexity of the charge scheduling problem, again with \emph{single deadlines},
depending on gaps / no-gaps, constant or unbounded periods, and fixed or unbounded demand.
This setting encompasses the case where a user of an electric vehicle needs a certain amount of charge for the next day's driving before he or she leaves for work in the morning (single deadline), while allowing for potentially unavailable charging times---e.g., the vehicle cannot be charged while the agent is at work.
Please see the overview in Table~\ref{tab:Problem-complexity-of}.

\begin{prop}\label{prop:NPc}
(The decision version of) the charge scheduling problem is in NP for all variants.
\end{prop}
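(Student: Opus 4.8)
The plan is to exhibit, for the decision version of every variant, a polynomial-size certificate together with a polynomial-time verifier. The decision problem reads: given the instance (the supply values $m_t$, the triples $(v_{i,k},d_{i,k},w_{i,k})$, and where applicable the speeds $s_i$) and a rational threshold $V$, decide whether there is an allocation $a$ respecting all resource and speed constraints with $\sum_i v_i(a_i)\ge V$. The natural certificate is the allocation itself, encoded by the $n\cdot|T|$ rational numbers $a_i(t)$.

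Given such a candidate allocation, the verifier would proceed directly from the definition of $v_i$: check $a_i(t)\ge 0$ and, in the \emph{fixed} and \emph{gaps} variants, $a_i(t)\le s_i$; check $\sum_i a_i(t)\le m_t$ for every $t$; compute the prefix sums $\bar a_i(t)$; for each triple test whether $\bar a_i(d_{i,k})\ge w_{i,k}$ and, if so, add $v_{i,k}$ to a running total; and finally compare that total against $V$. This performs $O\!\left(n\cdot|T|+\sum_i|K(i)|\right)$ additions and comparisons, so it runs in polynomial time provided the numbers $a_i(t)$ have polynomially bounded bit-length. Soundness is immediate, since the verifier accepts only when the allocation genuinely meets every constraint and reaches value at least $V$.

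The one point requiring a real argument — and the step I expect to be the main obstacle — is that whenever a value-$\ge V$ allocation exists at all, one exists whose entries are rationals of polynomial bit-complexity; a priori the $a_i(t)$ could be arbitrary reals. To handle this I would fix any feasible allocation $a^\ast$ with $\sum_i v_i(a^\ast_i)\ge V$ and let $S$ be the set of triples it satisfies, i.e.\ those $(i,k)$ with $\bar a^\ast_i(d_{i,k})\ge w_{i,k}$. Consider the polyhedron $P_S$ of allocations obeying nonnegativity, the speed bounds (where present), the supply bounds $\sum_i a_i(t)\le m_t$, and the demand inequalities $\bar a_i(d_{i,k})\ge w_{i,k}$ for all $(i,k)\in S$. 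Every defining inequality is linear with rational data taken directly from the input, and $P_S$ is bounded, since nonnegativity together with the supply bound forces $0\le a_i(t)\le m_t$. Thus $P_S$ is a nonempty (it contains $a^\ast$) bounded rational polytope, hence the convex hull of its vertices, and by the standard Cramer's-rule bound each vertex has encoding length polynomial in that of the inequality system, and therefore polynomial in the input size.

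It remains to note that any such vertex $a'$ still satisfies every demand inequality indexed by $S$, so the set of triples it meets contains $S$ and hence $\sum_i v_i(a'_i)\ge\sum_{(i,k)\in S} v_{i,k}\ge V$. Then $a'$ is a value-$\ge V$ allocation of polynomial bit-complexity, i.e.\ a polynomial-size certificate that the verifier above accepts. I expect this argument to apply uniformly across all variants, because the differences between them (presence or absence of speed bounds, the magnitude of demands, the number of periods) only alter the rational data and the number of constraints, both of which stay polynomial; nothing about the certificate or verifier changes.
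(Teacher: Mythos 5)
Your proposal is correct, and it is considerably more careful than the paper's own proof, which consists of the single sentence that ``the verification of a schedule can in all cases be done in polynomial time.'' The paper implicitly takes the allocation as the certificate and asserts that checking it is easy, but it never addresses the point you correctly single out as the crux: since the $a_i(t)$ range over the reals, one must argue that a value-$\ge V$ allocation of polynomially bounded encoding length exists whenever any such allocation exists, or the certificate need not have polynomial size at all. Your polyhedral argument closes this gap cleanly: fixing the satisfied set $S$, observing that the feasible region $P_S$ is a nonempty bounded rational polytope whose defining data come straight from the input, and invoking the standard vertex-complexity bound gives a witness $a'$ that satisfies at least the triples in $S$ and hence attains value at least $V$ (using that the $v_{i,k}$ are nonnegative, which the model implicitly assumes). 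The only cosmetic remark is that one could reach the same conclusion slightly more directly by noting that an optimal solution of the LP maximizing, say, $0$ over $P_S$ can be taken at a vertex, but your formulation via the convex hull of vertices is equally valid. In short: same certificate-and-verifier skeleton as the paper, but you supply the bit-complexity justification the paper omits, which is a genuine strengthening rather than a divergence.
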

\begin{proof}
The verification of a schedule can in all cases
be done in polynomial time.
\end{proof}

By a reduction from the knapsack problem, we argue that single-deadline charge scheduling is weakly NP-hard, even for a single period.
\begin{prop}\label{prop:knapsack}
The single-deadline charge scheduling problem with unbounded
demand is (weakly) NP-hard, even when $|T|=1$.
\end{prop}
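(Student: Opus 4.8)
The plan is to reduce from the (weakly NP-hard) knapsack problem. Given a knapsack instance with items $1,\ldots,n$, where item $i$ has value $v_i$ and weight $w_i$, capacity $W$, and a target value $V$, I would construct a single-deadline charge scheduling instance with a single period, so $|T|=1$, and supply $m_1 = W$. For each item $i$ I introduce an agent $i$ with value $v_i$, demand $w_i$, and (the only possible) deadline $d_i = 1$. Because demand is unbounded, the weights $w_i$ and the capacity $W$ can be encoded directly, which is what preserves the numerical scale responsible for knapsack's weak NP-hardness.

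The key observation is that with a single period the cumulative allocation $\bar{a}_i(1) = a_i(1)$ is just a nonnegative real assigned to each agent, subject to $\sum_i a_i(1) \le W$, and agent $i$ contributes $v_i$ to the objective precisely when $a_i(1) \ge w_i$. First I would argue that we may restrict attention to allocations in which each $a_i(1)$ is either $0$ or exactly $w_i$: if $a_i(1) < w_i$ then the agent contributes nothing, so setting $a_i(1)=0$ only relaxes the supply constraint; and if $a_i(1) \ge w_i$, reducing it to exactly $w_i$ frees resources by free disposal without changing the agent's contribution. Hence every feasible allocation can be converted, without decreasing social welfare, into one that selects a subset $S$ of agents with $\sum_{i\in S} w_i \le W$ and earns value $\sum_{i\in S} v_i$.

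This establishes a value-preserving correspondence in both directions: a subset $S$ feasible for knapsack (that is, $\sum_{i\in S} w_i \le W$) yields the allocation $a_i(1)=w_i$ for $i\in S$ and $a_i(1)=0$ otherwise, with social welfare $\sum_{i\in S} v_i$; conversely, any charge schedule of welfare at least $V$ yields, after the normalization above, a knapsack-feasible subset of value at least $V$. The charge scheduling instance therefore admits a solution of value at least $V$ if and only if the knapsack instance does, and the construction is clearly polynomial, completing the reduction. Combined with Proposition~\ref{prop:NPc}, this shows the variant is in fact weakly NP-complete.

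I expect the only delicate point to be the normalization step, namely justifying that an optimal allocation can be taken to place either $0$ or exactly $w_i$ on each agent; the binary ``all-or-nothing'' structure of the valuation together with free disposal makes this argument short, but it is the one place where the continuous allocation problem must be reconciled with the discrete item-selection of knapsack. Once this is in place, the equivalence is immediate.
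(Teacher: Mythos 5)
Your reduction is exactly the one the paper uses: a single period with supply $m_1=W$ and one agent per item with demand $w_i$, value $v_i$, and deadline $1$; your normalization step (each $a_i(1)$ may be taken to be $0$ or exactly $w_i$) just makes explicit the correspondence the paper states directly. The only detail the paper adds that you omit is a remark that the agents' charging speeds must not bind (it sets $s_i$ unbounded and notes the argument also covers the fixed and gaps variants), but this does not affect the correctness of your argument.
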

\begin{proof}
This proof uses a reduction from knapsack, which is (weakly) NP-hard~\cite{Garey79}.
Let a knapsack problem with capacity $W$ and a set of items be given. For each
item with value $v_{i}$ and weight $w_{i}$, create an agent with
a demand $w_{i}$ at deadline $1$ and value $v_{i}$, and no bound on speed $s_i$.
Define the
supply $m_{1}=W$. 
%\vc{but knapsack is easy with polynomial-sized $W$, and we earlier assumed $m_1$ is polynomially bounded.  It seems we just don't want to make that earlier assumption?}
A set of items fits in the knapsack if and only
if the respective agents can be charged before the deadline.
Because there is no bound on the speed, the result holds for unbounded charging speed, but also for fixed and gaps.
\end{proof}

We next consider single-deadline charge scheduling for multiple periods $T$, a supply per period $t$ of $m_t \leq M$, $n$ agents, and demand (and/or charging speed) per agent of at most $L$.
The optimal solution for this problem is denoted by $OPT(m_{1},m_{\text{2}},\ldots,m_{\left|T\right|},n)$, and this is defined by the following recursive function that returns the best we can do with the first $i$ agents only.
{\small
\begin{multline}
\label{alg:dp-constant-period}
OPT(m_{1},m_{\text{2}},\ldots,m_{\left|T\right|},i) =\\
\begin{cases}
0 & \text{if }i=0\\
\max\left\{ OPT(m_{1},m_{\text{2}},\ldots,m_{\left|T\right|},i-1),o\right\}  & \text{otherwise}
\end{cases}
\end{multline}
\begin{multline*}
\text{where } o=\\
\max_{a_{1},\ldotp,a_{\left|T\right|}}\left\{ OPT(m_{1}-a_{1},\ldots,m_{\left|T\right|}-a_{\left|T\right|},i-1)+v_{i}\left(a\right)\right\} .
\end{multline*}
}
In this formulation, gaps and charging speed limits can be incorporated with additional constraints on the possible allocation $a$.
A dynamic programming implementation of this recursive function gives an algorithm that solves this problem in polynomial time if both maximum supply and maximum demand are polynomially bounded and the number of periods is constant.

\begin{prop}\label{prop:dp-constant-period}
The single-deadline charge scheduling problem can be solved in  $O\left(n\cdot M^{|T|}\right)$
space and $O\left(n\cdot M^{|T|}\cdot L^{|T|}\right)$ time where $M\leq n\cdot L$
is the maximum supply, and $L$ is the maximum demand (or maximum charging speed). 
%\vc{doesn't the same thing work with multiple deadlines?  ... OK, I see, Corollary 1 already says this.  Maybe we should just say it here? (but maybe it's OK the way it is) Should we also explicitly point out ``even with gaps'' etc.?}
\end{prop}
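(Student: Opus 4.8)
The plan is to realize the recursive definition of $OPT$ in~(\ref{alg:dp-constant-period}) as a bottom-up dynamic program, tabulating one entry per reachable state and then bounding both the number of states and the work performed at each one. A state is a pair consisting of a residual-supply vector $(m_1,\ldots,m_{|T|})$ and an agent index $i\in\{0,1,\ldots,n\}$, and the corresponding table entry stores the best social welfare obtainable from agents $1,\ldots,i$ under that residual supply. Since the recurrence references only entries with smaller $i$, the table can be filled in order of increasing $i$. I would first argue correctness, and then count states and per-state work to read off the two bounds.

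For correctness I would proceed by induction on $i$, proving that $OPT(m_1,\ldots,m_{|T|},i)$ equals the optimal welfare achievable using only agents $1,\ldots,i$ under supply $(m_1,\ldots,m_{|T|})$. The base case $i=0$ is immediate, as no agent contributes value. For the inductive step, consider an optimal allocation restricted to the first $i$ agents: either agent $i$ receives the all-zero allocation, giving value $OPT(\cdot,i-1)$ on the same supply, or agent $i$ receives some vector $a=(a_1,\ldots,a_{|T|})$, in which case agents $1,\ldots,i-1$ are served from the residual supply $(m_1-a_1,\ldots,m_{|T|}-a_{|T|})$. Because the agents interact only through the shared per-period constraint, once agent $i$'s consumption is fixed the two subproblems decouple, so the best completion for the remaining agents is exactly $OPT(m_1-a_1,\ldots,m_{|T|}-a_{|T|},i-1)$ by the inductive hypothesis. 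Maximizing over the two cases, and over all feasible $a$ in the second, is precisely the recurrence.

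For the complexity bounds I would assume, as is standard for pseudo-polynomial dynamic programs, that supplies and demands are integral. Each coordinate of a residual-supply vector then lies in $\{0,1,\ldots,M\}$, so there are at most $(M+1)^{|T|}$ supply vectors; combined with the $n+1$ choices of $i$, this yields $O(n\cdot M^{|T|})$ states and hence the stated space bound. At each state, computing $o$ requires enumerating the feasible allocation vectors $a$ for agent $i$. The crucial observation is that each coordinate $a_t$ may be restricted to $\{0,1,\ldots,L\}$: under a charging-speed bound we have $a_t\le s_i\le L$, while in the unbounded-speed case there is never any benefit in giving a single-deadline agent more than its demand $w_i\le L$ in total (overcharging only consumes supply others could use, by free disposal), so each $a_t\le L$ as well. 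There are therefore at most $(L+1)^{|T|}$ allocations to test per state, each evaluated cheaply, giving $O(L^{|T|})$ work per state and $O(n\cdot M^{|T|}\cdot L^{|T|})$ total.

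Two small reductions justify the remaining hypotheses, and this bounding of the allocation space is exactly where I expect the main subtlety. First, the assumption $M\le n\cdot L$ is without loss of generality: with $n$ agents each demanding at most $L$, capping every $m_t$ at $n\cdot L$ removes no achievable allocation, so we may enforce $M\le n\cdot L$ at the outset. Second, in the single-deadline setting, allocating to agent $i$ in any period after its deadline $d_i$ contributes no value and only wastes supply, so those coordinates may be fixed to $0$; this is what makes the restriction of each $a_t$ to $\{0,\ldots,L\}$ legitimate rather than the much larger $\{0,\ldots,M\}$. The integrality assumption, together with this $O(L)$-per-coordinate bound on the allocation space, is precisely what keeps both the state space and the per-state enumeration finite and delivers the claimed bounds.
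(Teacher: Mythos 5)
Your proposal is correct and follows the same approach as the paper, which simply asserts that the dynamic program based on Equation~\ref{alg:dp-constant-period} achieves the stated space and time bounds; you have filled in the state-counting, the per-state enumeration bound of $(L+1)^{|T|}$ allocations, and the correctness induction that the paper leaves implicit. No discrepancies.
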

\begin{proof}
The DP based on the recursion in Equation~\ref{alg:dp-constant-period} requires $O\left(n\cdot M^{|T|}\right)$
space and $O\left(n\cdot M^{|T|}\cdot L^{|T|}\right)$ time.
%where $M\leq n\cdot L$ is the maximum supply, and $L$ is the maximum demand (or charging speed).
\end{proof}

In the gaps and fixed charging speed problem variants, the maximum charging speed may be very large, 
%\vc{I wonder if we should call it ``fixed charging speed'' instead of ``constant'' because ``constant'' might be interpreted in the complexity theoretic sense that it does not vary across instances (i.e., small) -- this is how we use it for number of periods}
and therefore the reduction from knapsack (Proposition~\ref{prop:knapsack}) applies.
However, when the maximum charging speed is polynomially bounded, then also the number of alternatives $a_t$ for a single period $t$ in the $\max$ in the recursive formulation is polynomially bounded, and so is the maximum supply. Therefore, in that case, and with a constant number of periods, the problem is in P.

\begin{thm}\label{thm:c-periods}
With a constant number of periods, the single-deadline charge scheduling problem is weakly NP-complete.
If furthermore the demand, supply, or maximum charging speed is polynomially bounded, 
the charge scheduling problem is in P.
\end{thm}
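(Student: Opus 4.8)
The plan is to assemble the statement from the three results already established: Proposition~\ref{prop:NPc} for membership in NP, Proposition~\ref{prop:knapsack} for hardness, and the dynamic program of Proposition~\ref{prop:dp-constant-period} for the algorithmic upper bounds. The statement has two parts---weak NP-completeness in the general constant-period case, and membership in P once one of demand, supply, or charging speed is polynomially bounded---and each part corresponds to a different combination of these ingredients.

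For the first part I would argue weak NP-completeness in three steps. Membership in NP is immediate from Proposition~\ref{prop:NPc}. For NP-hardness I would invoke Proposition~\ref{prop:knapsack}: it already gives (weak) NP-hardness with $|T|=1$, and a single period is certainly a constant number of periods, so the reduction applies verbatim here. What remains is to justify the word \emph{weakly}, i.e.\ to rule out strong NP-completeness; for this I would point to the pseudo-polynomial dynamic program of Proposition~\ref{prop:dp-constant-period}. Since $|T|=O(1)$, its running time $O(n\cdot M^{|T|}\cdot L^{|T|})$ is polynomial in the numeric magnitudes $M$ and $L$, hence pseudo-polynomial, which is exactly the certificate that the problem is weakly rather than strongly NP-complete.

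For the second part I would again start from the time bound $O(n\cdot M^{|T|}\cdot L^{|T|})$ of Proposition~\ref{prop:dp-constant-period}, recalling that $|T|$ is constant, so it suffices to show that both $M$ and $L$ are polynomially bounded in each of the three cases. The key observation is a set of capping arguments that let a polynomial bound on any single one of the three quantities propagate to the others: supply in any period that exceeds the total demand is useless and may be truncated, so $M\le n\cdot L$ always (this is already recorded in Proposition~\ref{prop:dp-constant-period}); demand exceeding the total available supply $\sum_t m_t\le |T|\cdot M$ is unsatisfiable and the corresponding agent may be dropped; and the per-period allocation $a_t$ never exceeds the available supply $m_t\le M$ nor the charging speed, so the effective branching factor $L$ is bounded by the $\min$ of supply, speed, and residual demand. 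Running the cases: if demand or charging speed is polynomially bounded then $L$ is polynomial and $M\le n\cdot L$ is polynomial too; if instead supply is polynomially bounded then $M$ is polynomial and the branching is bounded by $M$, so effectively $L\le M$ is polynomial. In every case both factors are polynomial and, with constant $|T|$, the DP runs in polynomial time.

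The routine parts---NP membership, the knapsack reduction, and the DP itself---are all inherited from earlier results, so the only real work is the case analysis of the second part. I expect the main obstacle to be the \emph{supply polynomial} case, because there neither the demand nor the charging speed is assumed bounded, and one must notice that the DP's branching factor is automatically controlled by the available supply (an allocation can never consume more than is present), so no separate bound on demand or speed is needed. Making the three capping observations precise, and checking that they interact correctly with the relation $M\le n\cdot L$, is where the argument has to be stated carefully; everything else is bookkeeping over the constant number of periods.
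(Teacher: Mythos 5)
Your proposal is correct and follows essentially the same route as the paper: weak NP-hardness via the knapsack reduction of Proposition~\ref{prop:knapsack}, NP membership from Proposition~\ref{prop:NPc}, and the pseudo-polynomial dynamic program of Proposition~\ref{prop:dp-constant-period} (with constant $|T|$) both certifying weakness and yielding polynomial time once any one of demand, supply, or charging speed is polynomially bounded. Your explicit capping arguments are just a more careful spelling-out of the paper's remark that a polynomial bound on any one of these quantities effectively bounds the others.
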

\begin{proof}
This follows from the run time bound given in Proposition~\ref{prop:dp-constant-period}, which is exponential (only)
in the number of periods and the $\log$ of the total demand (or effective demand, due to charging speed limitations)
and supply.  Note that if either demand or supply is polynomially bounded then effectively the other is as well.
Thus, for constant $|T|$ and polynomially bounded demand, supply, or maximum charging speed, this problem is in P.
%If either the maximum charging speed or the supply is polynomially bounded, this provides a polynomial bound also on the effective demand.
When these are not polynomially bounded, weak NP-hardness  follows from Proposition~\ref{prop:knapsack}.
%\vc{rewrote a little bit -- please check -- because from the proposition it was not clear that only polynomially bounded demand was needed, since supply is in the runtime formula as well}
\end{proof}

% \MA{Maybe add some transition paragraphs between these theorems since they really seem to be the meat of the paper. I can do this tomorrow, but I'm adding a note to remind myself.}
%\dw{Maybe we can have the dynamic programs as separate contributions (theorems), which are then used in the complexity proofs.}

Next, we let go of the constant number of periods, and provide an algorithm for single-deadline scheduling for a polynomially bounded number of periods, but only for the case where there is no bound on the charging speed, i.e., a charging task can complete in one period if sufficient supply is available.

Consider the following algorithm.
\begin{enumerate}
\item Sort all charging task triples on deadline (increasing, with arbitrary tie-breaking). 
\item Let $M_{1},M_{2},\ldots,M_{n}$ be the \emph{cumulative supply} at the deadlines
of tasks $1,2,\ldots,n$---that is, $M_i = \sum_{t=1}^{d_i} m_t$---and let $M_{0}=0$. 
\item Run a DP based on the following recursion (where $m$ denotes the remaining cumulative supply available for the first $i$ tasks):
{\small 
\begin{multline}\label{alg:dp-p-periods}
OPT(m,i) =\\
\begin{cases}
0 & \text{if }i=0\\
OPT\left(\min\left\{ m,M_{i-1}\right\} ,i-1\right) & \text{if }m<w_{i}\\
\max\left\{ OPT\left(\min\left\{ m,M_{i-1}\right\} ,i-1\right),\right.&\\
\quad \left.v_{i}+OPT\left(\min\left\{ m-w_{i},M_{i-1}\right\} ,i-1\right)\right\}  & \text{otherwise}
\end{cases}
\end{multline}
}
\noindent where the first call is $OPT(M_{n},n)$. 
\item Recover the set of tasks that get allocated and match this to resources
to find a concrete possible allocation.
\end{enumerate}
This DP is similar to the standard one for knapsack in the special case in which all deadlines are equal. 

\begin{prop}\label{prop:p-periods}
The single-deadline charge scheduling problem with unbounded charging speed can be solved in $O\left(n\cdot M_n\right)$
space and time.
\end{prop}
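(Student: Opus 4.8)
The plan is to establish correctness by first pinning down exactly which subsets of tasks are schedulable when the charging speed is unbounded, then showing that the recursion in Equation~\ref{alg:dp-p-periods} maximizes value over precisely those subsets, and finally counting the number of states and the cost per state. The crucial structural fact is a \emph{prefix (Hall-type) feasibility condition}: since the speed is unbounded, the only restrictions are that each selected task $i$ must receive its full demand $w_i$ from periods $t \le d_i$ and that no period's supply $m_t$ is over-allocated. With the tasks sorted by increasing deadline, I claim a subset $S$ is feasible if and only if for every $j$ we have $\sum_{i \in S,\, i \le j} w_i \le M_j$. Necessity follows by summing the per-period supply constraints over all $t \le d_j$, since every task in $S$ with index $\le j$ can only draw on those periods. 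Sufficiency I would show by a greedy earliest-deadline pouring of supply (equivalently, by verifying the max-flow/min-cut condition on the bipartite fractional assignment of supply to tasks), where the prefix inequalities are exactly the cut constraints that guarantee a saturating flow exists; the tightest cuts are the downward-closed prefixes $\{i : d_i \le d_j\}$, whose reachable supply totals $M_j$.

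Given this characterization, I would prove by induction on $i$ that $OPT(m,i)$ equals the maximum total value of a subset $S \subseteq \{1,\dots,i\}$ that satisfies the prefix condition for all $j \le i$ and additionally has $\sum_{k \in S} w_k \le m$, while maintaining the invariant that every call satisfies $m \le M_i$. The base case $i = 0$ is immediate. For the step, the three branches of the recursion correspond to task $i$ being unaffordable within the current budget ($m < w_i$, forcing a skip), to skipping $i$ when it is affordable, and to including $i$ for value $v_i$; in the latter two the budget passed down is capped at $M_{i-1}$, which is exactly the prefix constraint at index $i-1$ and simultaneously keeps the state within range. Since $m \le M_i$ along every path, the budget bound $\sum_{k\in S} w_k \le m$ subsumes the $j=i$ prefix inequality, so the inductive characterization is consistent. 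Because the top-level call is $OPT(M_n, n)$ and, as the cumulative supplies $M_j$ are nondecreasing in $j$, $M_n$ is the largest of them, the constraint $\sum_{k \in S} w_k \le M_n$ is the binding prefix condition for the full set, so the returned value is the optimum.

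For the resource bound, the reachable states are pairs $(m,i)$ with $m \in \{0, 1, \dots, M_n\}$ and $i \in \{0, \dots, n\}$, so a memoization table has $O(n \cdot M_n)$ entries; each entry is computed from at most two previously stored entries with a constant number of comparisons, additions, and a maximum, giving $O(n \cdot M_n)$ time and space overall (one may further replace $M_n$ by $\min\{M_n, \sum_i w_i\}$, since no selection needs budget exceeding the total demand). Recovering the chosen set in step~4 is a standard $O(n)$ backtrack through the table, and turning that feasible set into a concrete allocation $a_i(t)$ is precisely the earliest-deadline greedy used to prove sufficiency above, which stays within the stated bound.

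The main obstacle I expect is the feasibility characterization itself --- in particular, proving sufficiency of the prefix condition cleanly (the flow/Hall argument) and then verifying that the DP's $\min\{\cdot, M_{i-1}\}$ caps encode it faithfully. Care is needed with ties in deadlines (where $M_{j} = M_{j+1}$ and the arbitrary tie-breaking must not invalidate the prefix inequalities) and with the interaction between the running budget $m$ and the caps $M_{i-1}$, to confirm that the invariant $m \le M_i$ indeed holds along every recursive path so that the inductive hypothesis applies.
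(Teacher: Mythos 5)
Your proposal is correct and follows the same route as the paper: it analyzes exactly the dynamic program of Equation~\ref{alg:dp-p-periods} and bounds its state space by $O(n\cdot M_n)$ with constant work per state. You in fact go well beyond the paper's one-line proof by supplying the prefix (Hall-type) feasibility characterization and the induction establishing the DP's correctness, both of which the paper leaves implicit; the only assumption you share tacitly with the paper is integrality of the $m_t$ and $w_i$, without which the number of reachable budget values would not be bounded by $M_n$.
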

\begin{proof}
The main contribution to the run time comes from the dynamic program (Equation~\ref{alg:dp-p-periods}).
This algorithm requires $O\left(n\cdot M_n\right)$ space and time.
\end{proof}

\begin{thm}\label{thm:p-periods}
The single-deadline charge scheduling problem with unbounded charging speed is weakly NP-complete. 
If furthermore the demand or supply are polynomially bounded, the charge scheduling problem is in P.
\end{thm}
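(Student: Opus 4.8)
The plan is to prove the two assertions separately, reusing results already established. NP-completeness will come from combining the membership proposition with the knapsack reduction, and the polynomial-time cases will follow by reading off the running time of the dynamic program of Proposition~\ref{prop:p-periods} under each boundedness hypothesis.

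First I would settle NP-completeness. Membership in NP is immediate from Proposition~\ref{prop:NPc}. For hardness I would invoke Proposition~\ref{prop:knapsack}: its reduction from knapsack imposes no bound on the charging speeds $s_i$, so it already lies within the unbounded-charging-speed regime studied here, and it uses a single period $|T|=1$, which is trivially a polynomially bounded number of periods. Thus the present variant is weakly NP-hard, hence weakly NP-complete. That the hardness is only weak (and not strong, assuming $\mathrm{P}\neq\mathrm{NP}$) is witnessed by the $O(n\cdot M_n)$ algorithm of Proposition~\ref{prop:p-periods}, which is pseudo-polynomial in the supply.

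Next I would derive the polynomial-time cases from that same $O(n\cdot M_n)$ bound, where $M_n=\sum_{t=1}^{d_n} m_t$ is the largest cumulative supply. The supply-bounded case is direct: if every $m_t$ is polynomially bounded, then, since we are in the $|T|=O(n^c)$ regime, $M_n$ is a sum of polynomially many polynomially bounded terms and is itself polynomial, so the DP runs in polynomial time.

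The demand-bounded case is the only part needing a short extra argument, and it is the main (if mild) obstacle, since supply could a priori be enormous so that $M_n$ is not polynomial on its face. The key observation is that, by free disposal, a schedule can never usefully consume more total resource than the total demand $\sum_i w_i$: allocating beyond $w_i$ to agent $i$ by its deadline yields no extra value. Concretely, I would establish $OPT(m,i)=OPT(\min\{m,\sum_{j\le i}w_j\},i)$ by induction, so that capping the initial cumulative supply $M_n$ at $\sum_i w_i$ leaves the optimum unchanged; this is consistent with the recursion in Equation~\ref{alg:dp-p-periods}, which only ever compares the remaining supply against individual demands and subtracts demands from it, so the reachable $m$-values never exceed the initial value. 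Replacing $M_n$ by $\min\{M_n,\sum_i w_i\}$ then bounds the number of reachable states $(m,i)$ by $O(n\sum_i w_i)$, which is polynomial whenever demand is polynomially bounded, since $\sum_i w_i\le n\max_i w_i$. As in Theorem~\ref{thm:c-periods}, this reflects that bounding either demand or supply effectively bounds the useful range of the other.
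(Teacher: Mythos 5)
Your proposal is correct and follows essentially the same route as the paper: NP-membership from Proposition~\ref{prop:NPc}, weak hardness from the knapsack reduction of Proposition~\ref{prop:knapsack}, and polynomial-time solvability in both bounded cases by reading off the $O(n \cdot M_n)$ bound of Proposition~\ref{prop:p-periods}, with the demand-bounded case handled by the observation that supply exceeding total demand can be ignored. Your explicit capping argument $OPT(m,i)=OPT(\min\{m,\sum_{j\le i}w_j\},i)$ is just a more detailed justification of the paper's one-line remark to the same effect.
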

\begin{proof}
If supply is polynomially bounded, the run time bound from Proposition~\ref{prop:p-periods} is polynomial, and so the problem is in P.
This also holds in case demand is polynomially bounded, because all supply above the demand can be ignored.
Otherwise, the algorithm is pseudo-polynomial and weak NP-hardness follows from Proposition~\ref{prop:knapsack}.
\end{proof}

%\dw{This same DP works for \emph{one} agent with multiple tasks and a
%charging speed limit (to be encoded in the cumulative supply).}

\begin{thm}\label{thm:strong-gaps}
The single-deadline charge scheduling problem with unbounded periods and gaps is strongly NP-complete, even with constant demand and all deadlines at the end ($T$).
\end{thm}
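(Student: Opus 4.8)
The plan is to establish NP membership directly from Proposition~\ref{prop:NPc} and then prove strong NP-hardness by a reduction from \emph{3-Dimensional Matching} (3DM), which is strongly NP-complete because its instances carry no numerical parameters at all (so there is nothing to blow up under a unary encoding). First I would record how the hypotheses simplify the target problem: with unbounded charging speed, all deadlines equal to $T$, and gaps, each agent $i$ is described only by a value $v_i$, a constant demand $w_i$, and a set $A_i \subseteq T$ of periods in which it may charge at unbounded speed (speed $0$ elsewhere). A selected subset $S$ of agents is schedulable exactly when the transportation problem that ships $w_i$ units out of each $i \in S$ using only the edges $(i,t)$ with $t \in A_i$, subject to the per-period capacities $m_t$, is feasible. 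The problem thus becomes: choose a maximum-value subset of agents that admits such a feasible flow.

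Given a 3DM instance with disjoint sets $X,Y,Z$ of size $q$ and a collection of triples $\mathcal{C} \subseteq X \times Y \times Z$, I would build the instance as follows. Create one period for each of the $3q$ ground elements, each with supply $m_t = 1$, and let the horizon $T$ equal this number of periods. For every triple $\tau = (x,y,z) \in \mathcal{C}$, create an agent with value $v_\tau = 1$, demand $w_\tau = 3$, deadline $d_\tau = T$, and availability $A_\tau = \{x,y,z\}$. The question is whether total value at least $q$ is achievable. Every numerical parameter (all supplies, demands, and values) is a constant, and the number of periods and agents is linear in the input, so resolving this instance certifies \emph{strong} NP-hardness; note also that $w_\tau = 3$ witnesses the ``constant demand'' claim and $d_\tau = T$ the ``deadlines at the end'' claim.

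The heart of the proof is the equivalence between schedulable subsets and matchings. The easy direction: if the selected triples are pairwise element-disjoint, each draws exactly one unit from each of its three capacity-$1$ periods without contention, so the selection is feasible with value equal to its cardinality. The delicate direction, which I expect to be the main obstacle, is the converse, since I must rule out any loophole from continuous (fractional) charging. Here the capacity-$1$ periods do the work: an agent with demand $3$ whose only three available periods each supply at most $1$ can meet its demand solely by receiving one full unit from each of the three, so every selected triple \emph{saturates} its three element-periods; two selected triples sharing an element would then both require the single unit of that period, a contradiction. Hence feasible selections are precisely the matchings, the maximum attainable value equals the maximum matching size (which is at most $q$), and value $q$ is achievable if and only if the 3DM instance admits a perfect matching. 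Combining this equivalence with NP membership completes the argument.
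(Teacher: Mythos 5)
Your proof is correct and takes essentially the same approach as the paper, which reduces from exact cover by 3-sets rather than 3DM but builds the identical gadget (one unit-supply period per ground element, one value-$1$, demand-$3$ agent per triple restricted to its three element-periods, target value $q$). Your explicit argument that unit-capacity periods force each satisfied agent to saturate all three of its periods, ruling out fractional loopholes, is a slightly more careful rendering of the paper's converse direction.
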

\begin{proof}
This proof is based on the following reduction from exact cover by 3-sets.
Let a set $X$, with $|X|=3q$ and $q\in\mathbb{N}$, and a collection $C$ of 3-element subsets
of $X$ be given. Define the following charge scheduling problem: let $|T|=3q$
and the supply be $1$ per time period. For each $c_{i}\in C$ with
$c_{i}=\left\{ x_{1},x_{2},x_{3}\right\} $, define an agent with a value of $v_i=1$, a deadline $d_i=T$, and demand $w_i=3$, who can only charge during 
times $x_{1}$, $x_{2}$ and $x_{3}$ (by making all other time slot gaps, i.e., setting the charging speed to 0 at those slots).
It is possible to attain an objective value of $q$ if and only if
$C$ contains $q$ non-overlapping
subsets, for the following reasons.  If it contains $q$ such subsets, we can satisfy the corresponding agents' demands for an objective value of $q$.  Conversely, to obtain $q$ objective value, we need to satisfy $q$ agents, who must correspond to nonoverlapping subsets for us to be able to simultaneously satisfy them.
%This is the case, because for each agent $i$ in the subset of ``agents'' with valuation 1 in a charging schedule, we have a 3-element subset $c_{i}$ that is non-overlapping with another subset $j$.
%\vc{rewrote this proof slightly to specify the precise bids etc. by agents, please check}
Since exact cover by 3-sets is strongly NP-hard and the reduction is polynomial,
the charge scheduling problem with unbounded periods, gaps, and constant demand is also strongly NP-hard.
This trivially extends to polynomially bounded and unbounded demand.
\end{proof}

The above results are summarized in Table~\ref{tab:Problem-complexity-of}.
Furthermore, when supply is polynomially bounded, the variants with unbounded demand that are weakly NP-complete attain membership in P.
Only for the variants with non-constant numbers of periods $T$ and fixed charging speed is the complexity still open.
The dynamic program based on Equation~\ref{alg:dp-constant-period} can be used for these cases, but has exponential run time $\Omega(M^{|T|})$; 
the dynamic program based on Equation~\ref{alg:dp-p-periods}, however, does not apply, because it schedules taking only supply constraints into account, ignoring any charging speed constraints.
Conversely, the hardness result from Theorem~\ref{thm:strong-gaps} relies on being able to set the charging speed to zero in selected periods, which is not possible in the variant with fixed charging speed.

%\begin{enumerate}
%\item a (trivial) reduction from the variants with constant number of periods
%to the polynomial periods with otherwise the same conditions, so the
%polynomial periods are at least as hard (ceteris paribus), used above
%for $l,r$
%\end{enumerate}

\subsection{Multiple Deadlines}
\begin{table*}[ht]
\noindent \begin{flushleft}
\begin{tabular}{ccc}
\emph{gaps} & \emph{fixed charging speed} & \emph{unbounded charging speed}\tabularnewline
\begin{tabular}{c|>{\centering}p{1.15cm}|>{\centering}p{1.15cm}|>{\centering}p{1.15cm}|}
\multicolumn{1}{c}{\emph{$|T|$}} & \multicolumn{1}{>{\centering}p{1.15cm}}{demand constant} & \multicolumn{1}{>{\centering}p{1.15cm}}{demand polynomial} & \multicolumn{1}{>{\centering}p{1.15cm}}{demand unbounded}\tabularnewline
\cline{2-4} 
$O(1)$ & P$^{C\ref{cor:multiple}}$ & P$^{C\ref{cor:multiple}}$ & weak NP-c$^{C\ref{cor:multiple}}$\tabularnewline
\cline{2-4} 
$O\left(n^{c}\right)$ & strong NP-c$^{T\ref{thm:multiple-strong}}$ & strong NP-c$^{T\ref{thm:multiple-strong}}$ & strong NP-c$^{T\ref{thm:multiple-strong}}$\tabularnewline
\cline{2-4} 
\end{tabular} & %
\begin{tabular}{c|>{\centering}p{1.15cm}|>{\centering}p{1.15cm}|>{\centering}p{1.15cm}|}
\multicolumn{1}{c}{} & \multicolumn{1}{>{\centering}p{1.15cm}}{demand constant} & \multicolumn{1}{>{\centering}p{1.15cm}}{demand polynomial} & \multicolumn{1}{>{\centering}p{1.15cm}}{demand unbounded}\tabularnewline
\cline{2-4} 
 & P$^{C\ref{cor:multiple}}$ & P$^{C\ref{cor:multiple}}$ & weak NP-c$^{C\ref{cor:multiple}}$\tabularnewline
\cline{2-4} 
 & strong NP-c$^{T\ref{thm:multiple-strong}}$ & strong NP-c$^{T\ref{thm:multiple-strong}}$ & strong NP-c$^{T\ref{thm:multiple-strong}}$\tabularnewline
\cline{2-4} 
\end{tabular} & %
\begin{tabular}{c|>{\centering}p{1.15cm}|>{\centering}p{1.15cm}|>{\centering}p{1.15cm}|}
\multicolumn{1}{c}{} & \multicolumn{1}{>{\centering}p{1.15cm}}{demand constant} & \multicolumn{1}{>{\centering}p{1.15cm}}{demand polynomial} & \multicolumn{1}{>{\centering}p{1.15cm}}{demand unbounded}\tabularnewline
\cline{2-4} 
 & P$^{C\ref{cor:multiple}}$ & P$^{C\ref{cor:multiple}}$ & weak NP-c$^{C\ref{cor:multiple}}$\tabularnewline
\cline{2-4} 
 & strong NP-c$^{T\ref{thm:multiple-strong}}$ & strong NP-c$^{T\ref{thm:multiple-strong}}$ & strong NP-c$^{T\ref{thm:multiple-strong}}$\tabularnewline
\cline{2-4} 
\end{tabular}\tabularnewline
\end{tabular}
\par\end{flushleft}

\smallskip{}

\caption{\label{tab:Problem-complexity-of-1}Problem complexity of variants
with multiple deadlines per agent, where T$x$ refers to Theorem~$x$ and C$x$ refers to Corollary~$x$.}
\end{table*}

Up to this point, we have only allowed an agent to have  a single charging deadline.
However, an agent might like to have some charge available earlier, just in case, resulting in multiple deadlines, as discussed earlier.
\begin{comment}
agent may desire to receive a fixed charge at various different points throughout a time period.
Realistically, the agent may have multiple deadlines.
This situation may arise, in the case of electric vehicles, when an individual may have multiple activities that they would like to use their vehicle for, charge allowing.
Specifically, maybe an agent needs to have a full charge on the vehicle in order to attend work the next day, but would also like to go out for ice cream with their family the previous night, charge permitting.
\end{comment}

In this section, we consider the variants with multiple deadlines; see Table~\ref{tab:Problem-complexity-of-1}. Obviously, this is a strictly harder setting then the single-deadline case.

\begin{prop} \label{prop:multiple}
Any problem variant with multiple deadlines is at least as hard 
as the corresponding variant with a single deadline.
%(ceteris paribus).
\end{prop}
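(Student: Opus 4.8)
The plan is to present the single-deadline variant as a syntactic special case of the multiple-deadline variant, so that the identity embedding serves as a trivial polynomial-time reduction. Recall that the agent objective $v_i(a_i) = \sum_{k \in K(i)} v_i(a_i, v_{i,k}, d_{i,k}, w_{i,k})$ collapses to exactly the single-deadline objective $v_i(a_i, v_i, d_i, w_i)$ when $|K(i)| = 1$. Thus every single-deadline instance is already a well-formed multiple-deadline instance in which each agent carries exactly one value--demand--deadline triple.

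First I would fix a common setting of the remaining three dimensions---charging-speed model (fixed / unbounded / gaps), number of periods ($O(1)$ or $O(n^c)$), and demand model (constant / polynomial / unbounded)---and define the embedding: given a single-deadline instance, keep the same agents, the same supply profile $m_t$, the same speed constraints $s_i$, the same horizon $T$, and set $K(i) = \{1\}$ for every $i$. I would then check that this map lands in the \emph{corresponding} multiple-deadline variant, i.e., that it leaves the speed model, the horizon length, and the demand and supply bounds unchanged; since adjoining a one-element deadline set touches none of these parameters, the image sits in the same cell of the classification and is computable in linear time. Because the feasible allocations (those satisfying $\sum_i a_i(t) \le m_t$) and the objective value coincide for the instance and its image, a solver or decision oracle for the multiple-deadline variant immediately answers the single-deadline variant, and consequently every hardness result for the single-deadline variant transfers upward.

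The only delicate point---and the closest thing to an obstacle---is precisely this bookkeeping that the embedding preserves the cell of the table: one must confirm that using $|K(i)|=1$ does not inflate the demand bound, alter the speed model, or lengthen the horizon. As all of these parameters are inherited verbatim, the verification is routine and the proposition follows.
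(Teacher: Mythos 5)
Your proposal is correct and matches the paper's argument, which likewise relies on the trivial reduction embedding each single-deadline instance as a multiple-deadline instance with $|K(i)|=1$. You simply spell out in more detail the (routine) check that the embedding preserves the speed model, horizon, and demand bounds, which the paper leaves implicit.
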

\begin{proof} 
This follows directly from a (trivial) reduction from the variant with single deadlines to the
variant with multiple deadlines.
\end{proof}

In fact, any of the problem variants with more than two deadlines and a non-constant number of periods is strongly NP-hard, which we show by a reduction from exact cover by 3-sets.
\begin{thm}\label{thm:multiple-strong}
The charge scheduling problem with multiple deadlines and polynomially bounded periods is strongly NP-hard, even with three deadlines per agent, constant demand, and no bound on charging speeds.
\end{thm}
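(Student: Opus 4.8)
The plan is to reduce from exact cover by 3-sets (X3C), mirroring Theorem~\ref{thm:strong-gaps} but using the three deadlines to play the structural role that per-agent gaps played there. Given $X=\{1,\dots,3q\}$ and a collection $C$ of 3-element subsets, I would take $|T|=3q$ periods with unit supply $m_t=1$, identifying period $t$ with element $t$. For each set $c=\{e_1<e_2<e_3\}\in C$ I create one agent with three deadlines $d_{i,1}=e_1,\ d_{i,2}=e_2,\ d_{i,3}=e_3$, unit values $v_{i,1}=v_{i,2}=v_{i,3}=1$, and the ``staircase'' demands $w_{i,1}=1,\ w_{i,2}=2,\ w_{i,3}=3$, so the agent is paid at deadline $k$ exactly when $\bar a_i(e_k)\ge k$. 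The decision question is whether welfare $3q$ is attainable. All parameters are polynomial, demands are bounded by the constant $3$, and there is no bound on charging speed, matching the hypotheses; membership in NP is Proposition~\ref{prop:NPc}.

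The forward direction is routine: from an exact cover $\{c_1,\dots,c_q\}$ I charge one unit in each of the three element-periods of every chosen set; exactness means these $3q$ unit-charges land on distinct periods, respecting $m_t=1$, and each chosen agent meets all three deadlines, for welfare $3q$. For the reverse direction I would first extract structure by a counting argument. Since $w_{i,k}=k$, an agent paid along a prefix of its deadlines up to value $c$ must have charged at least $c$ units, so total welfare is at most total charge, which is at most total supply $3q$. Hence welfare $3q$ forces \emph{all} $3q$ units of supply to be used, every period to be used exactly once, and every paid agent to be paid precisely along a prefix of its deadlines with no wasted charge. What remains is to read off $q$ disjoint sets whose union is $X$, i.e.\ to argue that the element-sets of the paying agents partition the periods.

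The hard part is exactly this last step, and it is where the absence of gaps bites. A deadline imposes only a \emph{lower} bound $\bar a_i(e_k)\ge k$; by free disposal there is no upper bound on $\bar a_i$, so nothing stops an agent from meeting $w_{i,2}=2$ by charging at some period strictly before its own element $e_2$ whenever that earlier period is otherwise idle. This ``front-loading'' lets two agents that share an element both be fully paid by rerouting one of their units onto an unused earlier period, yielding welfare $3q$ from a schedule whose induced partition of $T$ does \emph{not} coincide with the agents' sets---so the construction as stated admits false positives and the reverse implication fails. Unlike Theorem~\ref{thm:strong-gaps}, I cannot simply forbid an agent from charging outside its elements, so the real work is to augment the instance---with auxiliary agents (still at most three deadlines each) and/or a shaped supply profile---whose sole purpose is to make the cumulative supply available by each period equal to the number of unit-demands \emph{due} by that period, leaving no early slack to front-load into and forcing the schedule to be tight at every prefix, hence rigidly pinning each paid agent to its own three element-periods. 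Establishing that such enforcers collapse the feasible schedules to exactly the exact covers is the step I expect to demand the real effort; given it, strong NP-hardness follows from the polynomial, constant-demand bounds, and the extension to polynomially bounded and unbounded demand, and to the \emph{gaps} and \emph{fixed charging speed} columns of Table~\ref{tab:Problem-complexity-of-1}, is then immediate.
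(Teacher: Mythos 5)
You have correctly diagnosed the fatal flaw in your own construction: with unit values and staircase demands, a welfare of $3q$ does not certify an exact cover, because an agent may meet its $k$-th deadline by front-loading charge into an earlier idle period (indeed, welfare $3q$ need not even come from $q$ fully-paid agents; it could be assembled from partial prefixes across more agents). For a concrete false positive, take $q=2$ and $C=\{\{1,2,3\},\{2,4,6\},\{3,5,6\}\}$: no two sets are disjoint, so there is no exact cover, yet allocating slots $\{1,2,6\}$ to the agent for $\{2,4,6\}$ and slots $\{3,4,5\}$ to the agent for $\{3,5,6\}$ fully pays both and achieves welfare $6$. So the reverse direction genuinely fails, and since you leave the repair (auxiliary enforcer agents or a shaped supply profile) as an unestablished ``real work'' step, the proposal is incomplete as a proof.

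The missing idea in the paper is not extra enforcer gadgetry but a reweighting of the values that makes just-in-time charging the unique optimum. Keeping your staircase demands $w_{i,k}=k$, the paper sets $v_{i,k}=3q-x_k+k$ for the $k$-th deadline at element $x_k$. A unit of supply at slot $t$ that serves an agent's $k$-th deadline $x_k\geq t$ then contributes $3q-x_k+k\leq 3q-t+k$, with equality exactly when $t=x_k$, i.e., when there is no front-loading; moreover at most $q$ slots can carry the $+3$ bonus and at most $2q$ the $+3$ or $+2$ bonus, since third deadlines require matching first and second ones. Summing these per-slot caps bounds every schedule by $\sum_{t=1}^{3q}(3q-t)+q(1+2+3)=\tfrac{9}{2}q^2+\tfrac{9}{2}q$, attained if and only if $q$ agents have all three deadlines met each exactly at its deadline, which is precisely an exact cover. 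Your counting argument (value $\leq$ charge $\leq$ supply) is the right opening move, but without the position-dependent values it cannot pin each paid unit to its own element-period, and that pinning is the whole content of the reverse direction.
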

\begin{proof}
This proof is based on the following reduction from exact cover by 3-sets.
%to the problem with constant demand and no bound on charging speeds.

Let a set $X$, with $|X|=3q$, and a collection $C$ of 3-element
subsets of $X$ be given. Assume w.l.o.g.~that the elements in $X$
are $\left\{ 1,2,\ldots3q\right\}$. Define the following charging
problem: let $|T|=3q$ and the supply be $1$ per time period. For
each $c_{i}\in C$ with $c_{i}=\left\{ x_{1},x_{2},x_{3}\right\} \subset X$,
define a valuation function $v_{i}$ such that a value of $3q-x_{1}+1$
is obtained if a charge of 1 takes place before time $x_{1}$, an additional
$3q-x_{2}+2$ if an additional charge of 1 takes place before time
$x_{2}$, and an additional value of $3q-x_{3}+3$ if an additional
charge of 1 takes place before time $x_{3}$. 
Then observe the following: 
\begin{lem*}
Any feasible schedule has a value of at most $\frac{9}{2}q^{2}+\frac{9}{2}q$, and this value is attained if and only if there are $q$ agents that have all three of their deadlines met, each just in time (with the charge arriving exactly at the deadline).
\end{lem*}
\begin{proof}
For a slot $i$ to contribute value, it needs to contribute
to a deadline $x_{k}$ with $i\leq x_{k}$. Letting $k \in \{1,2,3\}$ denote whether it is the corresponding agent's first, second, or third deadline, holding $k$ fixed, the maximum value that $i$ can contribute
is if $x_k$ is in $\arg\max_{x}\left\{ 3q-x+k\mid i\leq x\right\} = \{i\}$, for a value of $3q-i+k$. That is, ideally, every slot is used just in time for a deadline.

Furthermore, focusing on optimizing the $k$ terms, there can be at most $q$ slots that are used for a deadline with $k=3$, because
for each of these there must be one slot used for a deadline with 
$k=2$ and one for a deadline with $k=1$.
Similarly, there can be at most $2q$ slots that are used for deadlines with $k=3$ or $k=2$.  Hence, ideally, there are $q$ agents that have all three of their deadlines met.

All schedules thus have a value of at most $\sum_{i=1}^{3q}\left(3q-i\right)+\sum_{i=1}^{q}\left(1+2+3\right)=9q^{2}-\frac{1}{2}3q(3q+1)+6q=\frac{9}{2}q^{2}+\frac{9}{2}q$, and this value is attained only under the conditions of the lemma.
\end{proof}

\noindent To continue our reduction, we show that the optimal charging schedule has value $\frac{9}{2}q^{2}+\frac{9}{2}q$
if and only if $C$ contains $q$ non-overlapping subsets.
\begin{enumerate}
\item If $C$ contains $q$ non-overlapping subsets $\left\{ x_{1},x_{2},x_{3}\right\} $,
then for each of these, the respective agent's charges can be feasibly
scheduled exactly in slots $\left\{ x_{1},x_{2},x_{3}\right\} $,
leading to a value of $\left(3q-x_{1}+1\right)+\left(3q-x_{2}+2\right)+\left(3q-x_{3}+3\right)$
and thus a total value of $\sum_{i=1}^{3q}\left(3q-i\right)+\sum_{i=1}^{q}\left(1+2+3\right)=\frac{9}{2}q^{2}+\frac{9}{2}q$,
which is optimal (according to the above lemma).
\item If the optimal charging schedule has value $\frac{9}{2}q^{2}+\frac{9}{2}q$,
then this can only be because $q$ agents have been allocated 
three slots each, all exactly at their respective deadlines, according to the above lemma.
%\vc{took out a bunch of reasoning here that seemed less clean and complete than what was in the lemma's proof, so just expanded the lemma a bit and referred to that, please check}
\begin{comment}
because if an agent
is allocated a slot $i$ strictly before the deadline $x_{k}$ to which that slot contributes (i.e., $i<x_{k}$), then
this slot contributes a value of $3q-x_{k}+k$, which contradicts
this schedule to be optimal, because then this slot contributes a
value of $3q-i+k$, which is strictly greater. 
\end{comment}
Because all slots are
allocated at the respective deadlines, and there is only one deadline
per slot, the sets of deadlines are not overlapping, and hence the $q$ agents whose deadlines are met correspond to an exact cover.
\end{enumerate}
Since the reduction is polynomial and exact cover by 3-sets is strongly NP-hard, so is the charge scheduling problem with multiple deadlines,  polynomially bounded periods, constant demand and no bound on charging speeds.
This also directly implies strong NP-hardness in the case where demand is polynomial or unbounded and in the case where charging speeds may have gaps or are fixed.
\end{proof}

While Theorem~\ref{thm:multiple-strong} indicates that (assuming P$\neq$NP) it is impossible to solve the charge scheduling problem with multiple deadlines over an arbitrary horizon in polynomial time, in many cases, it may be sufficient to examine only a relatively short period.
Specifically, a grid manager may only be able to optimize over a single upcoming day due to uncertainty in longer term power production and uncertainty in consumers' preferences over longer horizons.
The following corollary indicates that in this setting, solving the charge scheduling problem may still prove feasible in practice.

\begin{cor} \label{cor:multiple}
With a constant number of periods, the charge scheduling problem with multiple deadlines is weakly NP-complete.
If furthermore the demand, supply, or maximum charging speed is polynomially bounded, then the problem is in P.
\end{cor}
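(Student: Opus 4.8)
The plan is to reuse, essentially unchanged, the tools already built for the single-deadline case, the crucial observation being that the dynamic program of Equation~\ref{alg:dp-constant-period} was stated in terms of the \emph{general} valuation function $v_i(a)$ and so applies verbatim once we let $v_i(a)=\sum_{k\in K(i)}v_i(a,v_{i,k},d_{i,k},w_{i,k})$.

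For the weak NP-completeness claim I would argue as follows. Membership in NP is immediate from Proposition~\ref{prop:NPc}, which is stated for all variants. For hardness I would invoke Proposition~\ref{prop:multiple}: the multiple-deadline variant is at least as hard as the corresponding single-deadline variant, and the single-deadline variant with a constant number of periods and unbounded demand is already weakly NP-hard by Proposition~\ref{prop:knapsack} (which holds even for $|T|=1$, hence a fortiori for constant $|T|$). This yields NP-completeness; that the hardness is only \emph{weak} rather than strong follows from the pseudo-polynomial algorithm described next, which rules out strong NP-hardness under the assumption P$\neq$NP.

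For the tractability half, I would show that the recursion of Equation~\ref{alg:dp-constant-period} already solves the multiple-deadline problem. The deadline structure enters the recursion only through the term $v_i(a)$, and for multiple deadlines this term is simply the sum of the per-deadline threshold checks, evaluable in $O(|K(i)|)$ time; hence both the correctness argument and the $O(n\cdot M^{|T|})$ space / $O(n\cdot M^{|T|}\cdot L^{|T|})$ time bound of Proposition~\ref{prop:dp-constant-period} carry over, up to a polynomial factor for evaluating $v_i$. With $|T|$ constant this is polynomial precisely when $M$ and $L$ are polynomially bounded, and, just as in the proof of Theorem~\ref{thm:c-periods}, each of the three hypotheses forces this: a polynomial charging-speed bound caps each per-period allocation and bounds the useful supply by $n$ times the speed; a polynomial supply bound caps both the state $m_t$ and each allocation $a_t\le m_t$; and a polynomial demand bound makes it useless to allocate more than $\max_k w_{i,k}$ to any agent. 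The only point needing care---and the nearest thing to an obstacle---is this last case under multiple deadlines: one must note that since the cumulative allocation $\bar a_i$ is monotone and the valuation tests only thresholds, the useful allocation to an agent never exceeds $\max_k w_{i,k}$, so polynomially bounded individual demands keep both the per-period allocation and the effective supply polynomially bounded, and the DP therefore polynomial.
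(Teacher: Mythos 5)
Your proposal is correct and follows essentially the same route as the paper's own proof: reuse the dynamic program of Equation~\ref{alg:dp-constant-period} with the general valuation function $v_i(a)$, obtain pseudo-polynomiality for constant $|T|$, and derive hardness via Proposition~\ref{prop:multiple} from the single-deadline case. Your extra care about the demand-bounded case (that the useful allocation to an agent never exceeds $\max_k w_{i,k}$) is a welcome elaboration of the paper's terser ``effectively so are the others,'' but it is not a different argument.
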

\begin{proof}
The dynamic program based on Equation~\ref{alg:dp-constant-period} also works with multiple deadlines:
the value function $v_i(a)$ used is the general one from the original problem definition.
Since the run time is exponential (only) in the number of periods
and the $\log$ of the total demand (or effective demand, due to charging speed limitations) and supply (see Proposition~\ref{prop:dp-constant-period}), it follows that when the number
of periods is constant, it is pseudo-polynomial, and the problem is weakly NP-complete by Proposition~\ref{prop:multiple}. If furthermore any of the total demand, supply, or maximum charging speed is polynomial, then (effectively) so are the others, and the problem is polynomial-time solvable (i.e., in P).
%, implying $a,b,g,h,m,n$, and
%\MA{Proposition 3 says that it is \emph{at least} weakly NP-complete, do we need to justify the fact that it is only weakly NP-complete, or is that obvious?}.
%, implying $c,i,o$ 
\end{proof}
With this result, we have established the computational complexity of all problem variants with multiple deadlines, as can be seen in Table~\ref{tab:Problem-complexity-of-1}.

\section{Discussion and Future Work}

The detailed analysis of the complexity of charge scheduling and the dynamic programs for some of the variants provide an important step towards practical applicability.
However, two aspects of high importance deserve a more in-depth discussion: 
first, in most real-life situations we will need to deal with the arrival of new agents, a so-called on-line problem~\cite{albers2009online}.
Second, the approach should also work in case agents are self-interested and have the possibility to state their preferences strategically.
Both these extensions are discussed below, including the relation between our contributions and relevant known results.

\subsection{On-line Charge Scheduling}
%Here we review work in so-called \emph{on-line} scheduling~\cite{albers2009online}, and discuss the consequences for the on-line version of the charge scheduling problem.

The \emph{on-line charge scheduling problem} is equivalent to the charge scheduling problem introduced earlier, except that each agent $i$ becomes known to the scheduler only after its arrival/release time $r_i$.
A common approach is to use competitive analysis to measure the performance of an on-line algorithm:
we compare the performance of an on-line algorithm to that of a clairvoyant algorithm that has all relevant information a priori.
We say that the on-line algorithm has a \emph{competitive factor} $r$, $0 < r < 1$, if it is guaranteed to achieve a value of at least $r$ times the value obtained by an optimal clairvoyant algorithm on any input.

\citeauthor{baruah1992competitiveness} (\citeyear{baruah1992competitiveness}) have demonstrated, using an adversary argument, that for a setting equivalent to the single-deadline on-line charge scheduling problem where (i) all charging speeds are equal to the supply, and (ii) all demand has the same value-density (i.e., $\frac{v_{i}}{w_{i}}=c$ for all $i$), no on-line scheduling algorithm can have a competitive factor greater than $1/4$~\cite{baruah1992competitiveness}.
Since the same input could occur with on-line charge scheduling, the following result is immediate.
\begin{cor*}
No on-line single-deadline charge scheduling algorithm can have a competitive factor greater than $1/4$, even with fixed supply, fixed charging speeds, and constant demand.
\end{cor*}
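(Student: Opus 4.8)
The plan is to prove this as an embedding result: the adversarial input family used by \citeauthor{baruah1992competitiveness} consists of valid on-line single-deadline charge scheduling instances, so their lower bound on the competitive factor transfers directly. First I would fix the correspondence between the two models. Set the supply to a constant $m_t = m$ for all $t$ and fix every agent's charging speed to $s_i = s = m$; this realizes their condition~(i) that all charging speeds equal the supply, and simultaneously places us in the ``fixed supply, fixed charging speed'' regime of the statement. Under this choice at most one agent's demand can be served at full rate in any period, so the single perishable resource behaves exactly like the single preemptive machine in their model, with demand $w_i$ playing the role of processing time and $v_i$ the role of value, and with each agent having a single firm deadline $d_i$.

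Next I would argue that the hard instances are realizable within this restricted class. Their bound is already stated for the uniform value-density case $v_i / w_i = c$ (condition~(ii)); since we are free to assign any values to agents, imposing uniform density is only a specialization and costs nothing. Given any on-line charge scheduling algorithm $A$, feeding it exactly these instances induces an on-line real-time scheduling algorithm that makes the same serve decisions period by period and collects the same total value. The competitive factor over the restricted class is an infimum over all admissible inputs, and the hard family is a subset of those inputs; since the adversary forces the achieved-to-optimal ratio on this subfamily to be at most $1/4$, the factor of $A$ is at most $1/4$ as well. This establishes the claimed impossibility for the restricted problem.

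The step I expect to require the most care is the last restriction, \emph{constant} demand, i.e.\ that the entire hard family can be taken with $w_i \le D$ for a single a-priori constant $D$. The fixed-supply and fixed-speed restrictions are built into the reduction by construction, and uniform value density is merely a specialization; the delicate point is that a standard adversary driving the ratio toward $1/4$ typically releases a chain of jobs of increasing length, which would make demand grow unboundedly. Here I would either exhibit a bounded-length variant of the adversary that still forces the ratio below $1/4 + \epsilon$ for every $\epsilon > 0$, or argue that, because the competitive factor only needs to be \emph{approached} rather than attained, each target gap $\epsilon$ can be met by an instance of bounded demand, so restricting to constant demand does not raise the factor above $1/4$. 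Confirming that such a bounded-demand realization of the \citeauthor{baruah1992competitiveness} construction exists is the crux; everything else follows once the model correspondence is in place.
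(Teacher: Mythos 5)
Your proposal matches the paper's proof, which simply observes that the adversarial inputs of Baruah et al.\ are themselves valid on-line single-deadline charge scheduling instances (fixed supply, charging speeds equal to supply, a single firm deadline each), so the $1/4$ bound on the competitive factor transfers immediately. You are in fact more careful than the paper on the one genuinely delicate point---the \emph{constant demand} restriction, which the paper does not address---and your observation that the competitive factor need only be approached, so that each target gap $\epsilon > 0$ can be realized by a bounded-demand instance of the adversary, is a sound way to close it.
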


A further bound comes from results on the competitiveness of on-line algorithms for bandwidth allocation~\cite{bar1995bandwidth}.
These results use the maximum ratio of the requested bandwidth to the available bandwidth, denoted by $\delta$.
In on-line charge scheduling, $\delta=\max_{i,t} \frac{s_i}{m_t}$.
If charging is done either at maximum speed or not at all, the following result for fixed supply ($m_t=m_t'$ for all $t,t'$) follows directly from the work by \citeauthor{bar1995bandwidth}.
\begin{cor*}
If realized charging speeds needs to be either the maximum $s_i$ or 0, and $\delta>0.5$, then no constant competitive deterministic algorithm for the on-line charge scheduling problem exists.
\end{cor*}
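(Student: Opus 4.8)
The plan is to observe that, once we impose the stated restriction, the online charge scheduling problem is a reformulation of the online bandwidth allocation problem analyzed by Bar-Noy et al.~\cite{bar1995bandwidth}, so that their deterministic lower bound for $\delta > 0.5$ transfers directly. First I would make the correspondence between the two models explicit. Under the restriction that the realized rate of each accepted agent $i$ is either $s_i$ or $0$, agent $i$ occupies exactly a fixed amount $s_i$ of the per-period supply whenever it is being served, and it must be served at this rate during $w_i/s_i$ (preemptively chosen) periods inside its feasible window $[r_i,d_i]$, subject to $\sum_i a_i(t) \le m_t$ at every period. This is precisely a bandwidth-allocation request: a job demanding a fixed bandwidth $s_i$ for a fixed duration, to be completed before a deadline, earning profit $v_i$ if accepted, against a per-period capacity $m_t$. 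With the fixed-supply assumption $m_t = m_{t'}$, the maximum request-to-capacity ratio of the induced instance is exactly $\max_{i,t} s_i/m_t = \delta$, matching the parameter used in their analysis.

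Second, I would fix the logical direction needed for an impossibility result. The correspondence exhibits online bandwidth allocation (with fixed capacity) as a \emph{special case} of the restricted online charge scheduling problem: every bandwidth-allocation instance is realizable, arrival by arrival, as a legal charge scheduling instance of the restricted form. Consequently any deterministic online charge scheduling algorithm induces, on these instances, a deterministic online bandwidth-allocation algorithm that makes the same accept/reject and scheduling decisions and hence attains the same competitive factor. Were some charge scheduling algorithm constant-competitive for $\delta > 0.5$, the induced bandwidth-allocation algorithm would be as well, contradicting the lower bound of Bar-Noy et al.; therefore no constant-competitive deterministic algorithm for the restricted charge scheduling problem exists.

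The main obstacle will be verifying that the two models coincide on the fine points rather than merely sharing the headline parameter $\delta$. In particular I would confirm that preemption is permitted in both settings (so that serving $i$ during $w_i/s_i$ scattered periods is legitimate and matches how their jobs occupy bandwidth), that the value objective $\sum_i v_i U_i$ coincides with their profit/throughput objective, and that integrality issues such as $w_i/s_i$ being an integer and the discretization of time are handled consistently with their instance format. Once these correspondences are checked, no separate adversary argument is needed: the result is immediate from their $\delta > 0.5$ lower bound applied to the induced instances.
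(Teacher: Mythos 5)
Your proposal is correct and follows the same route as the paper: the paper simply asserts that the result ``follows directly'' from the bandwidth-allocation lower bound of Bar-Noy et al.\ under the stated restriction, and you have filled in exactly the correspondence (and the correct direction of the embedding for transferring a lower bound) that this assertion relies on.
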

On the other hand, some positive results exist as well.
For example, there are several works proposing an on-line algorithm for other variants of charge scheduling of electric vehicles, such as without the supply constraint~\cite{tang2014online} or with a weak supply constraint~\cite{yu2016intelligent}.
Also, there are randomized algorithms for other variants of on-line scheduling and  algorithms that are competitive with the best ratio possible~\cite{koren1995d}.
It remains an open question which of these can be effectively applied to the on-line charge scheduling problem defined here.

\subsection{Eliciting User Preferences}
\label{sec:mech-elic-user}

A further additional difficulty in scheduling flexible charging capacity not directly addressed by this work is that of eliciting the valuations, deadlines, and quantities (i.e., $v_{i,k}, d_{i,k}$, and $w_{i,k}$) from the agents using the grid.
If the appropriate incentives are not provided, a rational, self-interested agent is likely to report a very high valuation for the earliest possible deadline with the maximum possible charge that 
%satisfies his or her requirements and 
is feasible for the grid to deliver.
However, this will lead to sub-optimal allocation of grid capacity, as capacity is allocated based on a selfishly inflated view of the agents' requirements.

We can in principle overcome this issue by eliciting the information from self-interested agents using techniques from \emph{mechanism design}---see, e.g.,~\citeauthor{vulkan2000ems}~(\citeyear{vulkan2000ems}) and~\citeauthor{nisan01}~(\citeyear{nisan01}).
There exists a standard mechanism for efficiently allocating resources, the \emph{ Vickrey-Clarke-Groves (VCG) mechanism} \cite{vickrey1961caa,clarke71a,groves73}.
%\cite{nisan07a}. vc: this is not a good reference for vcg
%While discussing the details of the VCG mechanism is beyond the scope of this work, i
Intuitively, the VCG mechanism allocates resources efficiently and then charges each agent for the loss to other agents' welfare that results from its presence.
%Therefore, an agent never pays more for some subset of the resources than its own value for them if they report their values truthfully, since they must harm another agent with a value less than themselves in order to be allocated the items under an efficient allocation.
%Moreover, if they lie about their valuations, deadlines, or energy requirements, they will either not be allocated the resources that they would most value, or they would be overcharged, leading to it being 
Under the VCG mechanism it is a \emph{dominant strategy} to report \emph{truthfully}.
% \vc{shortened this for space -- if we want to say something about VCG's uniqueness we should dig up better references for that. also I don't think Nisan is an appropriate reference for VCG, we could include just the actual Vickrey-Clarke-Groves papers but I don't think it's necessary}
%The VCG mechanism is also the \emph{only} known mechanism that, in general situations, efficiently allocates resources while maintaining this truthful revelation property \cite{nisan07a}.

While the VCG mechanism would directly solve the problem of efficiently and truthfully eliciting agents' private information, the mechanism has a significant downside relevant for this line of research; the mechanism must compute the efficient allocation of resources over the outcome space.
If an approximate solution is used in the place of the true efficient allocation, in addition to the mechanism no longer efficiently allocating resources, agents can have an incentive to lie \cite{nisan01,lehmann2002tra}.
% \vc{there should be an earlier reference for this, at least the other nisan paper?}
In this work, we have demonstrated that in many settings of interest it is computationally infeasible to compute the socially efficient allocation, implying that it is computationally infeasible to implement the VCG mechanism.
However, if the VCG mechanism  instead always computes an allocation that is \emph{maximal-in-range}, i.e., it is an optimal allocation among all allocations that the algorithm will ever select, then the VCG mechanism again truthfully elicits the agents' information \cite{nisan07a}.

As we have demonstrated, if time periods are constant and valuations are polynomially bounded, then the charge scheduling problem is computationally tractable.
%, implying that a VCG mechanism using allocations in this setting would also be truthful.
Therefore, we propose that a fruitful future line of research would be to examine optimal ways to discretize the allocation space (into a fixed number of time periods and a fixed number of possible valuations and demands reported by the agents) such that the problem is computationally tractable within this restricted (discretized) range.
%and a relatively efficient allocation can be expected, 
%given realistic user data.
We leave this exploration to future work.

\subsection{Acknowledgements}
\label{sec:acknowledgements}

Conitzer is thankful for support from NSF under awards IIS-1527434 and CCF-1337215.

% \newpage
\bibliographystyle{aaai}
\bibliography{main,gametheory,scheduling,weerdt,onlinescheduling}

\end{document}